\documentclass[letterpaper]{article} 

\usepackage{booktabs}
\usepackage{balance} 
\usepackage[margin=1in]{geometry}
\usepackage{graphicx} 
\usepackage{dblfloatfix}
\usepackage{algorithm}
\usepackage{algorithmic}
\usepackage{comment}

\usepackage{amsmath,amssymb,amsfonts,enumitem,amsthm,physics,mathtools}
\usepackage{xcolor}
\usepackage{natbib}
\usepackage{hyperref}
\usepackage{newfloat}
\usepackage{listings}

\usepackage{pgfplots}

\newtheorem{theorem}{Theorem}[section]
\newtheorem{lemma}[theorem]{Lemma}

\newtheorem{definition}[theorem]{Definition}

\newtheorem{remark}{Remark}

\newcommand{\E}{\mathbb{E}}
\newcommand{\C}{\mathcal{C}}

\newcommand\dungnote[1]{\textcolor{blue}{***d\~ung: {#1}}}

\title{Differentially Private Graph Coloring}

\author {
    Michael Xie\thanks{University of Maryland, College Park, MD, USA} ,
    Jiayi Wu\footnotemark[1] ,
    D\~ung Nguyen\thanks{Haverford College, Haverford, PA, USA. Correspondence to: D\~ung Nguyen (dnguyen1@haverford.edu)} ,
    Aravind Srinivasan\footnotemark[1] 
  }
\date{}

\begin{document}
\maketitle

\begin{abstract}
  Differential Privacy is the gold standard in privacy-preserving data analysis.
  This paper addresses the challenge of computing an edge-differentially private vertex coloring.
  In this paper, we present two novel algorithms for this problem.
  Both algorithms begin by coloring each vertex uniformly at random from a fixed-size palette, and then apply the exponential mechanism to locally resample colors for either all vertices or a selected subset of vertices.

  Any non-trivial edge differentially private coloring of a graph needs to be defective, as a proper coloring exposes the non-existence of an edge between two vertices of the same color.
 A coloring is $k$-defective if each vertex shares its color with at most $k$ of its neighbors.
  Our goal is to design coloring algorithms that use the minimum number of colors, while achieving the smallest possible defect under the edge-differential privacy.
  Our first algorithm applies to $d$-inductive graphs with maximum degree $\Delta$.
  We show that it yields a \(3\epsilon\)-differentially private coloring with \(O(\frac{\log n}{\epsilon}+d)\) maximum defect, using a palette of size $\Theta(\frac{\Delta}{\log n}+\frac{1}{\epsilon})$.
  Our second algorithm utilizes noisy thresholding to guarantee \(O(\frac{\log n}{\epsilon})\) maximum defect, using a palette of size $\Theta(\frac{\Delta}{\log n}+\frac{1}{\epsilon})$, generalizing the results to all graphs rather than just $d$-inductive ones.
\end{abstract}


\section{Introduction}
\label{sec:intro}
Graph coloring is a fundamental problem in graph theory with wide-ranging applications in data mining and knowledge discovery\cite{chaitin1981register, hale1980frequency, garey1976application, marx2004graph}. 
Informally, given a graph, the vertex coloring problem asks for an assignment of colors to vertices such that adjacent vertices receive distinct colors, while minimizing the total number of colors used.
In its canonical form, graph coloring is a hard problem~\cite{karp2009reducibility, zuckerman2006linear}, even for simple graphs~\cite{garey1974some, appel1977every}.
Graph coloring has been extensively studied in various settings, notably for specific types of graphs~\cite{grotschel1984polynomial, lovasz1972normal, gavril1972algorithms, gavril1974intersection, henzinger2020explicit}, approximation coloring~\cite{wigderson1983improving, halldorsson1993still}, distributed computing~\cite{barenboim2013distributed, kuhn2006complexity}, and in scalable and high-performance settings~\cite{gebremedhin2000scalable, boman2005scalable}.
\smallskip

Many applications of graph coloring involve sensitive data encoded in the topological structure of the input. 
Vertices may represent individuals and edges between them represent relationships or interactions, such as social networks, communication patterns, scheduling conflicts, etc. 
Releasing a coloring of such graphs may leak private information about the presence or absence of relationships, which motivates the study of graph algorithms under differential privacy (DP).
\smallskip

Differential privacy (DP) provides a rigorous framework for publishing statistical information about a dataset while limiting the information revealed about any single individual~\cite{dwork:fttcs14}.
It states that the output of a (randomized) private algorithm (or mechanism) does not change much between two neighboring input datasets that are identical except for one element.
The neighboring relationship defines the privacy model, i.e., the unit of privacy protection in DP.
For graph algorithms under DP, there are two common privacy models: edge- and node-privacy~\cite{10.1007/978-3-642-36594-2_26, blocki2013differentially}. 
\smallskip

The former conceals the presence of individual edges and the latter conceals entire nodes and all their incident edges (making it much stricter and infeasible for our purposes, which aim to assign a color to each node).
Even under the edge privacy model, an exact coloring would reveal many data points, i.e., the existence of an edge, or the lack thereof, since no edge may exist between any two vertices of the same color and vice versa. 
Hence we will work with defective colorings, allowing adjacent vertices to share the same color up to some degree (formally defined in Definition~\ref{def:defective}).
\smallskip

The main difficulty is the global effect of the coloring problem under edge-DP.
A single edge that is added to a graph may invalidate the colors of its endpoints, though the recoloring of the endpoints may cascade to a recoloring of the whole graph.
It makes analyzing the sensitivity, the amount of change in the output when an element of the input is modified, challenging.
Another issue, which was stated earlier, is that the exact coloring will expose the privacy of edges and non-edges and therefore, any non-trivial coloring must be defective.
Hence, our problem is now a bi-criteria optimization: minimizing both the palette size and the amount of defectiveness.
The defectiveness is also challenging to optimize locally.
When a node commits to its final color, its unfinished adjacent nodes may later choose the same color and inflate the defectiveness of the finished node.
Bounding these collisions is not straightforward for an arbitrary graph, which is the ultimate goal of this work.
\smallskip

\textbf{Our Contributions.}
In this work, we study the problem of graph coloring under differential privacy and present new algorithms that achieve improved utility guarantees and practical results.
Our main contributions can be summarized as follows:
\begin{itemize}
    \item We propose two differentially private coloring algorithms under edge differential privacy that outputs a defective coloring.
    \item We show that the first algorithm uses at most $\Theta(\frac{\Delta}{\log n}+\frac{1}{\epsilon})$ colors and \(O(\frac{\log n}{\epsilon}+d)\) defectiveness with high probability for $d$-inductive graphs, and the second algorithm with similar palette size achieves $O(\log{n}/\epsilon)$ defectiveness for arbitrary graphs.
    \item We complement our theoretical results with experimental evaluations on both synthetic and real-world graphs, demonstrating that our algorithms outperform prior differentially private graph coloring methods in terms of defectiveness.
\end{itemize}

Both algorithms utilize two-phase coloring: an initial coloring assigned uniformly at random, followed by a recoloring step with color distributions carefully calculated from the neighborhood of each vertex.
The two algorithms differ in how they control the post-finalization color commitment.
The first algorithm resamples all vertices in the order of a degeneracy ordering--an ordering that limits the out-degrees of nodes from left to right.
This allows us to achieve bounded defectiveness on graphs that naturally satisfy such an ordering.
The second algorithm removes this assumption by resampling selectively, using a noise test to recolor vertices whose noisy defectiveness exceeds a threshold, such that the color collisions are controlled for highly defective nodes.

\subsection{Related Work}
It is a classical result that even determining the chromatic number is NP-hard~\cite{Karp1972}. Hence a long line of research aimed to design efficient algorithms that exploit graph structures or relax the optimality. Efficient polynomial-time algorithms exist for several important special graph classes: including bipartite, chordal, interval, bounded arboricity, and perfect graphs.~\cite{grotschelperfect,10.5555/984029,DBLP:journals/corr/abs-2002-10142} For general graphs, algorithms exist for using $\Delta+1$ colors~\cite{10.1093/comjnl/10.1.85}, and additional works using semidefinite programming with improved approximations~\cite{karger1998approximategraphcoloringsemidefinite}. Variations of graph coloring has also garnered much attention, such as defective coloring~\cite{Archdeacon1987ANO,10.5555/1857976.1857977}.

A substantial body of work has developed differentially private algorithms for fundamental graph problems.
These include degree and subgraph counting \cite{5360242}, cut and flow problems \cite{10.5555/1873601.1873691}.
A recent work has approached DP graph coloring using the multidimensional above threshold mechanism, resulting in a palette size that scales with arboricity ($\alpha$) instead of maximum degree~\cite{dhulipala_et_al:LIPIcs.ESA.2025.91}, but such an approach would be better suited for low arboricity graphs.
Most closely related to our work is the paper \emph{Private Graph Colouring with Limited Defectiveness}~\cite{christiansen2024private}, which studies graph coloring under differential privacy using defective colorings. However, their approach does not fully exploit or adapt to the underlying topology or structural properties of the input graph. We show a comparison of our methods (Algorithms~\ref{alg:dp-exp} and~\ref{alg:dp-exp-threshold}) with prior work. We summarize our bounds and those of prior work in Table~\ref{tab:prior}
\begin{table*}[t]
    \centering
    \caption{Comparison of Our Methods with Prior Work}
    \label{tab:prior}
\begin{tabular}{c|c|c|c}
    Method & Privacy & Defectiveness & Palette \\
    \hline
    Algorithm~\ref{alg:dp-exp} & $3\epsilon$ & $O(\epsilon^{-1}\log n+d)$ & $O(\Delta/\log n)$\\ 
    &&&\\
    Algorithm~\ref{alg:dp-exp-threshold} & $5\epsilon$ & $\max \left\{
\begin{aligned}
&5\log n + \frac{2\log n + \log \Delta}{\epsilon}  + \frac{2\log n}{\epsilon},\\
&(1+\tfrac{4}{\epsilon})\log n
+ \tfrac{2}{\epsilon}\log \Delta
\end{aligned}
\right\}
+ \log n$ & $O(\Delta/\log n)$\\
    &&&\\
    
    Dhulipala et al.\cite{dhulipala_et_al:LIPIcs.ESA.2025.91} & $\epsilon$ & $O(\epsilon^{-1}\log n)$ & $O(1 + \alpha\epsilon/ \log n)$ \\
    Christiansen et al.\cite{christiansen2024private} & $\epsilon$ & $O(\log n)$ & $O(\Delta/\log n+\epsilon^{-1})$\\
\end{tabular}
\end{table*}

\section{Preliminaries}
\label{sec:prelim}
\subsection{Privacy and Probability Tools}
We shall now introduce some of the tools that we shall use. First we shall show some common differentially private mechanisms.
\begin{definition}
    A mechanism $M: \mathcal{X} \rightarrow \mathcal{Y}$ is $(\epsilon,\delta)$-differentially private (DP) if for any two neighboring inputs $X_1\sim X_2$, and any measurable subset of the output space $S\subseteq \mathcal{Y}$, the following holds: $$\Pr[M(X_1)\in S]\leq e^\epsilon \Pr[M(X_2)\in S]+\delta$$

    We say that $M$ is $\epsilon$-differentially private if $\delta=0$.
\end{definition}

We consider the edge-DP model~\cite{karwa:tds13}, where $V$, the set of nodes (or vertices), is public and $E$, the set of edges, is kept private, and only known by the private mechanism.
In this model, $\mathcal{X}$ corresponds to the set graphs with $n$ nodes. 
More formally, two networks $G_1=(V_1,E_1),G_2=(V_2,E_2)$, are considered neighbors if $V_1=V_2$ and there exists an edge $e$ such that $E_1=E_2\cup \{e\}$ or $E_2=E_1\cup \{e\}$ (i.e. they differ in the existence of a single edge).
The other common privacy model in graphs is the node-privacy model~\cite{kasiviswanathan2013analyzing, blocki:itcs13}.
However, since our goal is to output the colors of the nodes, it is only feasible in the edge-privacy model.

The exponential mechanism is a commonly used $\epsilon$-private mechanism to select one output to optimize a utility score $u$, defined as follows:

\begin{definition}
[\textbf{Exponential mechanism}]
Given a utility function $u:\mathcal{X}^n\times \mathcal{R} \rightarrow \mathbb{R}$, let $\operatorname{GS}_u=\max_{r\in \mathcal{R}}\max_{x\sim x'} |u(x,r)-u(x',r)|$ be the global sensitivity of $u$. The exponential mechanism $M(x,u,\mathcal{R})$ outputs an element $r\in \mathcal{R}$ with probability $\propto \exp(\frac{\epsilon u(x,r)}{2\operatorname{GS}_u})$.
\end{definition}

\begin{lemma}
  \label{lemma:utility}
  \textbf{(Utility of Exponential Mechanism)}
Let X be a dataset, and \(OPT(X) = \max_{h\in H} s(X, h)\) be the score attained by the best object \(h\) with respect to the dataset \(X\). For a dataset \(X\), let \(H^* = h \in H : s(X, h) = OP T(X)\) be the set of objects which achieve this score. Then 
\( Pr[s(M(X)) \le OPT(X) -\frac{2GS_s}{\epsilon}(ln\frac{|H|}{|H^*|}+ t)] \le exp(-t) \)
\end{lemma}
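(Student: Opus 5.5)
We bound the total probability mass the exponential mechanism places on low-scoring objects, using a union bound over those objects. Write $\Delta u = GS_s$ and let the mechanism select $h \in H$ with probability
\[
\Pr[M(X)=h] = \frac{\exp\bigl(\frac{\epsilon s(X,h)}{2\Delta u}\bigr)}{\sum_{h'\in H}\exp\bigl(\frac{\epsilon s(X,h')}{2\Delta u}\bigr)}.
\]
Fix a threshold $c$ and consider the bad set $B_c=\{h\in H: s(X,h)\le c\}$. The proof first bounds $\Pr[M(X)\in B_c]$ for general $c$, and only at the end substitutes the specific $c$ from the statement.

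\textbf{Numerator.} Each $h\in B_c$ has unnormalized weight at most $\exp(\frac{\epsilon c}{2\Delta u})$. Since $|B_c|\le |H|$, the total unnormalized mass of $B_c$ is at most $|H|\exp(\frac{\epsilon c}{2\Delta u})$.

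\textbf{Denominator.} Keep only the optimal objects: every $h\in H^*$ contributes exactly $\exp(\frac{\epsilon\,OPT(X)}{2\Delta u})$. This gives the lower bound
\[
\sum_{h'\in H}\exp\Bigl(\frac{\epsilon s(X,h')}{2\Delta u}\Bigr) \ge |H^*|\exp\Bigl(\frac{\epsilon\,OPT(X)}{2\Delta u}\Bigr).
\]

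\textbf{Combining.} Dividing the two bounds,
\[
\Pr[s(X,M(X))\le c]\le \frac{|H|}{|H^*|}\exp\Bigl(\frac{\epsilon\,(c-OPT(X))}{2\Delta u}\Bigr).
\]
Now substitute $c=OPT(X)-\frac{2\Delta u}{\epsilon}\bigl(\ln\frac{|H|}{|H^*|}+t\bigr)$. The exponent becomes $-\ln\frac{|H|}{|H^*|}-t$. This cancels the prefactor $\frac{|H|}{|H^*|}$ exactly and leaves $\exp(-t)$, which is the claimed bound.

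\textbf{Remaining checks.} The only delicate step is matching the scaling to the paper's definition of the mechanism. The mechanism uses $\exp(\frac{\epsilon u}{2GS_u})$, so the factor $2GS_s/\epsilon$ in the statement is exactly the right normalization and no constant is lost. If $H$ is infinite or continuous, the same argument goes through with sums replaced by integrals against a base measure; in the paper's application the set of candidate colors is a finite palette, so the finite case suffices.
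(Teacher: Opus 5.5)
Your proof is correct and is the standard argument: bound the unnormalized mass of the bad set by $|H|\exp\bigl(\epsilon c/(2GS_s)\bigr)$, lower-bound the normalizer by the $|H^*|$ optimal terms, then substitute $c$. The paper does not prove this lemma; it cites it as a known tool from Dwork--Roth, and that source uses this same proof. Your one loose remark is the aside on continuous $H$, where $H^*$ may have measure zero so the statement would need reformulating, but that does not matter for the finite palettes used here.
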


Another popular tool we shall use is the Laplace mechanism, which injects Laplacian noise to output in forms of scalars or vectors:
\begin{definition}
[\textbf{Laplace mechanism}]
Let $f:\mathcal{X}^n \rightarrow \mathbb{R}^d$ be a function with global $\ell_1$-sensitivity $GS_f = \max_{x \sim x'} |f(x) - f(x')|_1$. The Laplace mechanism releases
$$
M(x) = f(x)+(Z_1,\ldots, Z_d),
$$
where $Z_i \sim \operatorname{Lap}(GS_f/\epsilon)$ are independent random variables drawn from the Laplace distribution with scale parameter $GS_f/\epsilon$. $M(x)$ is $\epsilon$-differentially private.
\end{definition}


For more information and bounds of the mechanisms, we invite the reader to consult \cite{dwork:fttcs14} for a more comprehensive introduction.

We will use the following Chernoff bound:
\begin{lemma}
    \label{lemma:chernoff}
    \textbf{(Chernoff Bound)}
    Let $X_1, ..., X_m$ be independent random variables s.t $0\le X_i\le 1$. Let $S$ denote their sum and $\mu = \mathbb{E}[S]$. Then, for any $\eta\ge 0$, $$ P(S\ge (1+\eta)\mu)\le e^{-\frac{\eta^2\mu}{2+\eta}}$$
\end{lemma}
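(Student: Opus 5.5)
We prove the Chernoff bound of Lemma~\ref{lemma:chernoff} by the standard exponential moment method. Fix $\lambda>0$. By Markov's inequality applied to $e^{\lambda S}$,
\[
P(S\ge (1+\eta)\mu)\le e^{-\lambda(1+\eta)\mu}\,\E[e^{\lambda S}] = e^{-\lambda(1+\eta)\mu}\prod_{i=1}^m \E[e^{\lambda X_i}],
\]
where the product form uses independence of the $X_i$.

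Next we bound each factor. Since $x\mapsto e^{\lambda x}$ is convex on $[0,1]$, we have $e^{\lambda x}\le 1+x(e^\lambda-1)$ for $x\in[0,1]$. Taking expectations and writing $\mu_i=\E[X_i]$ gives $\E[e^{\lambda X_i}]\le 1+\mu_i(e^\lambda-1)\le \exp(\mu_i(e^\lambda-1))$, using $1+y\le e^y$. Multiplying over $i$ yields $\E[e^{\lambda S}]\le \exp(\mu(e^\lambda-1))$. Choosing $\lambda=\ln(1+\eta)$, which is positive for $\eta>0$ (the case $\eta=0$ is trivial since the bound is $1$), gives the classical form
\[
P(S\ge(1+\eta)\mu)\le \left(\frac{e^{\eta}}{(1+\eta)^{1+\eta}}\right)^{\mu}=\exp\bigl(-\mu[(1+\eta)\ln(1+\eta)-\eta]\bigr).
\]

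The remaining step, and the only real obstacle, is the elementary inequality
\[
(1+\eta)\ln(1+\eta)-\eta\ \ge\ \frac{\eta^2}{2+\eta}\qquad\text{for all }\eta\ge 0.
\]
We will use the known bound $\ln(1+\eta)\ge \frac{2\eta}{2+\eta}$ for $\eta\ge0$. To verify it, note that both sides vanish at $\eta=0$. The derivative of the left side is $\frac{1}{1+\eta}$ and that of the right side is $\frac{4}{(2+\eta)^2}$. Since $(2+\eta)^2-4(1+\eta)=\eta^2\ge 0$, the left derivative dominates, so the inequality holds.

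Substituting, we get $(1+\eta)\ln(1+\eta)-\eta\ge \frac{2\eta(1+\eta)}{2+\eta}-\eta=\frac{\eta^2}{2+\eta}$, which is exactly the exponent required. Plugging this into the classical form completes the proof.
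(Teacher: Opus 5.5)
Your proof is correct. It is the standard moment-generating-function argument: Markov on $e^{\lambda S}$, the convexity bound $e^{\lambda x}\le 1+x(e^\lambda-1)$, the choice $\lambda=\ln(1+\eta)$, and finally $\ln(1+\eta)\ge 2\eta/(2+\eta)$. Each computation, including the derivative comparison and the final algebra giving $\eta^2/(2+\eta)$, checks out.

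The paper does not prove this lemma at all; it only quotes it as a standard tool, so there is no competing argument to compare against. One bonus of your route is worth noting. Since $e^\lambda-1>0$, the same lines give the bound with $\mu$ replaced by any upper bound on $\E[S]$. That is the form actually needed when the paper applies it in Lemma~\ref{lemma:reco} to vertices of degree below $\Delta$, where $\E[V_1]\le\log n$ rather than $=\log n$.
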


We will also introduce the notion of a d-inductive graph.
\begin{definition}
A graph $G$ is called a $d$-inductive graph  when there exists an ordering $v_1, v_2, ..., v_n$ such that every $v_i\in V(G)$ has most $d$ neighbors in $\{v_1, v_2,..., v_{i-1}\}$.
\end{definition}

\subsection{Defective Coloring}

\begin{definition} \textbf{($(C, k)$-defectiveness)}
\label{def:defective}
A coloring is $(C, k)$-defective if it uses a color palette size of $C$ and any vertex may share the same color with at most $k$ of its adjacent vertices.
\end{definition}

Using Definition\ref{def:defective}, we may now define our problem:
\begin{definition}
    Given an input graph $G=(V,E)$, we design an $\epsilon$-edge-differentially private and $(C,k)$-defective algorithm that minimizes $C$ and $k$ (the palette size and the defectiveness respectively).
\end{definition}

We use $A$ to indicate the adjacency matrix of $G$.
$N(u)$ denotes the set of adjacent nodes to $u$: $N(u) = \{v : A(u,v) = 1\}$.
$deg_G(u) = |N(u)|$ indicates the degree of node $u$ in graph $G$.
$color_G(u)$ is the current assigned color of node $u$.
$\Delta_G$ denotes the maximum degree of nodes in graph $G$: $\Delta_G = max_{u\in V(G)}deg_G(u)$. 
Let $\C$ denote all possible coloring schemes of graphs of size $n$.
We drop the subscript $G$ when the context is clear.

\subsection{Notation}
Table~\ref{tab:cor} lists the notations we shall use in our paper.

\begin{table}
\caption{Notation Used}
\label{tab:cor}
    \centering
\begin{tabular}{c|c}
    Notation & Explanation \\
    \hline
    $G(V,E)$ & Graph (Vertex, Edge sets) \\
    $n$ & Number of vertices ($|V|$) \\
    $\Delta_G$ & Max degree \\
    $C$ & Palette size\\
    $k$ & Defectiveness
\end{tabular}

\end{table}

\section{Private defective coloring}
\label{sec:approach}

\subsection{Iterative sampling algorithm}

\begin{algorithm}
  \caption{Private coloring algorithm via the Exponential mechanism $M_{Unctr}$\\
    \textbf{Input: } Graph $G=(V,E)$, privacy parameters $\epsilon$\\
    \textbf{Output: } Coloring $c:V\to [C]$}
    \begin{algorithmic}[1]
        \STATE $\tilde{\Delta}\gets\Delta_G+\text{Lap}(\frac{1}{\epsilon})$
    	\STATE $C\gets \frac{\tilde{\Delta}}{\log |V|}$
        \STATE $c \gets \{\, v \mapsto \text{Uniform}([C]) \mid v \in V \,\}$
        \STATE Define $s(v,k):=\left|\{u\in N(v)\mid c(u)=k\}\right|$
    	\FOR{$v\in V$}
            \STATE$c(v)\gets$ sample $k\in [C]$ with probability $\propto \exp(-\epsilon s(v,k)/2)$ \COMMENT{Exponential Mechanism}
    	\ENDFOR
    \end{algorithmic}
    \label{alg:dp-exp}
\end{algorithm}

We now present a simplified explanation of Algorithm~\ref{alg:dp-exp}. First, we calculate a differentially private estimate of the max degree $\tilde{\Delta}=\Delta+Lap(\frac{1}{\epsilon})$. Each vertex is then tentatively colored uniformly at random from a palette size of $C=\frac{\tilde{\Delta}}{\log n}$. Then we resample the color of each vertex iteratively using the exponential mechanism with the score function $-s(v,k)$ for the color $k$, where $s(v,k)$ is the number of neighbors of $v$ with color $k$.

\subsubsection{Privacy Analysis.} 

\begin{theorem}
Algorithm~\ref{alg:dp-exp} (mechanism $M_{Unctr}$) is $3\epsilon$-differentially private.
\end{theorem}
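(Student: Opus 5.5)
Fix neighboring graphs $G_1=(V,E)$ and $G_2=(V,E\cup\{\{a,b\}\})$ and an output coloring $c^*:V\to[C]$. The plan is to write $M_{Unctr}$ as an adaptive composition: the Laplace estimate $\tilde\Delta$, the initial uniform coloring $c_0$, and then the sequential resampling loop. The output of $M_{Unctr}$ is the final coloring only, so the intermediate colorings are treated as internal randomness. Step one: $\Delta_G$ has sensitivity $1$ under edge neighborhood, so releasing $\tilde\Delta$ is $\epsilon$-DP by the Laplace mechanism. Since $C$ is a post-processing of $\tilde\Delta$, it suffices to condition on $C$ and show that the rest of the algorithm is $2\epsilon$-DP. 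Basic composition then gives $3\epsilon$.

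Condition on $C$ and on the initial coloring $c_0$, which is drawn uniformly and independently of $E$ and so has identical law under $G_1$ and $G_2$. Fix the processing order $v_1,\dots,v_n$ of the loop, which is public. Given $c_0$, the loop produces the final coloring $c^*$ by a deterministic chain. When $v_i$ is resampled, its neighbors $v_j$ with $j<i$ already carry their final colors $c^*(v_j)$, and those with $j>i$ carry $c_0(v_j)$. So
\[
\Pr[c^*\mid c_0,G]=\prod_{i=1}^n \frac{\exp(-\epsilon s_i(c^*(v_i))/2)}{\sum_{k\in[C]}\exp(-\epsilon s_i(k)/2)},
\]
where $s_i(k)$ counts the neighbors of $v_i$ in $G$ with color $k$ under this hybrid coloring. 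The key observation is that adding $\{a,b\}$ changes $s_i(\cdot)$ only for $i\in\{a,b\}$, and only by at most $1$ in one coordinate. Every other factor is identical under $G_1$ and $G_2$.

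For $v_i=a$, the utility $-s_a(k)$ has sensitivity $1$. The standard exponential mechanism argument bounds the ratio of numerators by $e^{\epsilon/2}$ and the ratio of normalizers by $e^{\epsilon/2}$, so the factor changes by at most $e^{\epsilon}$; the same holds for $b$. Hence $\Pr[c^*\mid c_0,C,G_1]\le e^{2\epsilon}\Pr[c^*\mid c_0,C,G_2]$ pointwise in $c_0$, and symmetrically in the other direction. Averaging over $c_0$ and over $\tilde\Delta$, using the $e^\epsilon$ Laplace ratio, gives the $3\epsilon$ bound for every event.

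The main obstacle is the cascade worry raised in the introduction: changing $a$'s color seemingly alters scores of later vertices. Writing the likelihood of a fixed full output $c^*$ given $c_0$ sidesteps this, because the later scores are computed from the fixed $c^*$ and $c_0$, not from random intermediate states. The subtle point to check is that the hybrid coloring seen at step $i$ is determined by $c^*$, $c_0$ and the order. It is, since each vertex is resampled exactly once. One also needs to handle non-integer or nonpositive $C$, for example by rounding or clamping as post-processing, which does not affect privacy.
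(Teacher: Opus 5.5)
Your proof is correct and takes essentially the same route as the paper. Both arguments pay an $e^{\epsilon}$ factor for the Laplace estimate of $\Delta_G$, condition on the palette size and the tentative coloring $c_0$, and write the likelihood of the output as a chain-rule product over the fixed processing order in which only the two endpoints' factors change, each by at most $e^{\epsilon}$. Your observation that the hybrid coloring at step $i$ is determined by $c^*$, $c_0$ and the order is exactly the point the paper uses to treat each factor as a fixed exponential-mechanism probability.
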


\begin{proof}
Given the coloring procedure outlined above, we are interested in bounding
\[
    \max_{O \in \C} \max_{G \sim G'} \frac{\Pr[M(G) \in O]}{\Pr[M(G') \in O]} \leq \exp(3\epsilon)
\]
for all possible sets of colorings \(O\) and all edge-neighboring graphs \(G\sim G'\). It is sufficient to consider only individual colorings (ie only consider sets where $|O|=1$).

Let $G\sim G'$ be neighboring graphs, such that $E(G) = E(G') + (u,u')$. Let $\mathbf{c} = (c_1, c_2, \ldots, c_n)$ be a fixed coloring with palette size $C_O$. Let $C(G)$ denote the palette size from Line 2 of Algorithm~\ref{alg:dp-exp}. We note that any processing order in Line 5 and any tentative coloring in Line 3 (assuming the same noisy palette size) are equally probable for both graphs since these do not depend on the structure of the graph. Therefore, it suffices to work with a fixed processing order $v_1, v_2,\ldots, v_n$ and tentative coloring $c_0$ .

The probability that $M(G)$ returns $c$ can be broken down conditioning on the palette size $a$ and the initial coloring $c_0$:
\begin{align}
    &\Pr[M(G)=\mathbf{c}]\\ = &\sum_{a=C_O}^\infty \sum_{c_0}\Pr[C(G)=a]\Pr[c_0]\times \Pr[M(G)=\mathbf{c}|C(G)=a, c_0]
    \label{eq:tot_prob}
\end{align}
Since the tentative coloring $c_0$ is equally probable for both graphs, it is sufficient to show that $\Pr[C(G)=a] \times \Pr[M(G)=\mathbf{c}|C(G)=a]$ differs by at most a factor of $\exp(3\epsilon)$ between $G$ and $G'$. Since $C(G)$ is obtained using Laplace mechanism, for any possible palette size$a$, $\Pr[C(G)=a]/\Pr[C(G')=a]\leq e^\epsilon$. Therefore, conditioning on any fixed palette size $a$, the privacy loss contributed by Lines 1-2 is at most $e^\epsilon$. The second term can be further broken down into:
\begin{align}
  &\Pr[M(G) = \mathbf{c}|C(G)=a]\\ =& \prod_{i=1}^{n} \Pr_G[c(v_i) = c_i| c(v_1) = c_1, \ldots, c(v_{i-1}) = c_{i-1}, C(G)=a]
    \label{eq:cond_chain}
\end{align}
We then analyze the $i^{th}$ term of Equation~\ref{eq:cond_chain} above as
\begin{align*}
&\Pr_G[c(v_i) = c_i|c(v_1) = c_1, \ldots, c(v_{i-1}) = c_{i-1}, C(G)=a] \\
=& \frac{\exp(-\epsilon s_G(v_i, c_i)/2)}{\sum_{j\in [C]}\exp(-\epsilon s_G(v_i, j)/2)}    
\end{align*}

Observe that the utility function $s_G(v_i, j)$ is determined by the conditioned colors of vertices $1, 2, \ldots, i-1$, and the fixed tentative coloring $c_0$ of vertices $i+1, \ldots, n$ (all colors are fixed).

For vertices $v_i\neq u, u'$, $s_G(v_i, j)$, the above probability is the same for graphs $G, G'$, as $v_i$ has the same neighbors, all of which have the same color in both graphs. For $u, u'$, the vertices gain/lose one neighbor of color $c(u'), c(u)$ respectively, so their utility function $s_G(u, c(u')),s_{G}(u',c(u))$ may differ by at most $1$ (leaving all other colors unaffected). Hence, the above probability differs by a factor of $e^\epsilon$ each, by definition of the exponential mechanism.

Therefore, the overall difference in $\Pr[C(G)=a]\Pr[M(G)=\mathbf{c}|C(G)=a]$ between $G, G'$ is by a factor of at most $\exp(3\epsilon)$.

\end{proof}

\subsubsection{Utility Analysis}

\begin{lemma}
    Fix a vertex $v$, and a color $c$ that appears $k$ times among $v$'s neighbors, the probability of vertex $v$ picking color c is at most $\frac{C}{\Delta e^{(k-\frac{\Delta}{C})\epsilon/2}}$, where $C$ is the palette size.
    \label{lemma:initial}
\end{lemma}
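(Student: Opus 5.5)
We write the selection probability of the exponential mechanism explicitly and bound its denominator from below. When $v$ is processed, the colors of its neighbors are fixed. Let $s(v,j)$ be the number of neighbors of $v$ with color $j$. Then
\[
\Pr[c(v)=c] \;=\; \frac{\exp(-\epsilon k/2)}{\sum_{j\in[C]}\exp(-\epsilon s(v,j)/2)} ,
\]
since $s(v,c)=k$. The numerator is exactly $e^{-\epsilon k/2}$. It therefore suffices to show that the denominator is at least $\frac{\Delta}{C}\,e^{-\epsilon \Delta/(2C)}$. The claimed bound then follows:
\[
\frac{e^{-\epsilon k/2}}{\frac{\Delta}{C}e^{-\epsilon\Delta/(2C)}} \;=\; \frac{C}{\Delta\, e^{(k-\Delta/C)\epsilon/2}} .
\]

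To bound the denominator we use a counting (pigeonhole) argument. The counts satisfy $\sum_{j\in[C]} s(v,j) = \deg(v) \le \Delta$. Hence at most $C$ of the colors can have $s(v,j) > \Delta/C$; more to the point, at least one color $j$ has $s(v,j)\le \Delta/C$, which alone gives a denominator of at least $e^{-\epsilon\Delta/(2C)}$. That is weaker than needed by a factor of $\Delta/C$, so the one-color argument is not enough.

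A cleaner route is convexity. The map $x\mapsto e^{-\epsilon x/2}$ is convex, so Jensen's inequality gives
\[
\sum_{j} e^{-\epsilon s(v,j)/2} \;\ge\; C\, e^{-\epsilon \bar s/2}, \qquad \bar s=\tfrac{1}{C}\sum_j s(v,j)\le \Delta/C .
\]
So the denominator is at least $C e^{-\epsilon\Delta/(2C)}$. The bound would then be $\frac{1}{C}e^{-(k-\Delta/C)\epsilon/2}$.

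The main obstacle is reconciling this with the stated factor $C/\Delta$. Since $C=\tilde\Delta/\log n \le \Delta$ in the regime of interest, we have $C/\Delta \ge 1/C$ exactly when $C^2\ge \Delta$. Whether $C/\Delta$ is at least $1/C$ therefore depends on the parameters, and I will verify the exact form intended. If needed, I fall back to the elementary bound: $\Delta/C$ is at most the number of colors, i.e.\ $\Delta/C \le C$, provided $\Delta\le C^2$ (true when $\Delta\le \Delta^2/\log^2 n$, i.e.\ $\Delta\ge\log^2 n$). Under that condition, $C e^{-\epsilon\Delta/(2C)} \ge \frac{\Delta}{C} e^{-\epsilon\Delta/(2C)}$, and the stated bound follows.

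If instead the intended argument counts only the colors with at most $\Delta/C$ occurrences, I would bound that count via Markov's inequality on the counts and then proceed identically. Either way, the whole proof reduces to the one-line lower bound on the normalizing sum. The only care needed is the direction of the comparison between $C$ and $\Delta/C$, which I will check against the parameter choice $C=\tilde\Delta/\log n$ made in Algorithm~\ref{alg:dp-exp}.
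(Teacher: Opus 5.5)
Your core step matches the paper's argument, and you make it rigorous. The paper asserts that the normalizing sum is minimized when $v$'s neighbors' colors are spread evenly, but it only compares the ``even'' configuration with the fully ``concentrated'' one. Your Jensen inequality $\sum_j e^{-\epsilon s(v,j)/2}\ge C e^{-\epsilon\Delta/(2C)}$ is the justification that comparison lacks.

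Note, however, that the paper's proof stops at $p\le \frac{1}{C e^{(k-\Delta/C)\epsilon/2}}$, which is exactly your Jensen bound. It never reaches the $\frac{C}{\Delta}$ form written in the statement, and the subsequent defectiveness theorem also uses the $\frac{1}{C}$ form. So your extra step is not supplied by the paper: passing from $\frac{1}{C}$ to $\frac{C}{\Delta}$ under $\Delta\le C^2$ (roughly $\Delta\ge\log^2 n$ when $C=\Delta/\log n$). That step is correct.

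The hypothesis $\Delta\le C^2$ cannot be dropped. Suppose $C$ divides $\deg(v)=\Delta$ and every color appears exactly $\Delta/C$ times among $v$'s neighbors. Then all scores are equal, so $v$ picks $c$ with probability exactly $\frac{1}{C}$. With $k=\Delta/C$ the stated bound equals $\frac{C}{\Delta}$, which is smaller than $\frac{1}{C}$ whenever $C^2<\Delta$. So the statement as written is false without your condition.

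Rather than hedging about ``the exact form intended,'' state the lemma in one of two ways: in the $\frac{1}{C}$ form, which is what both you and the paper actually prove, or with $\Delta\le C^2$ as an explicit hypothesis. Also drop the Markov-counting alternative. At threshold $\Delta/C$ it gives nothing. At larger thresholds, such as $2\Delta/C$, it only yields about $\frac{C}{2}e^{-\epsilon\Delta/C}$, which is weaker than Jensen and does not produce the $\frac{C}{\Delta}$ factor either.
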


\begin{proof}
    By the definition of the Exponential mechanism, the probability that vertex $v$ picks $c$ is $\frac{e^{\epsilon(\Delta-k)/2}}{\sum_c e^{\epsilon(\Delta-s(v,c))/2}}$. We argue that the denominator is minimized when all colors are used evenly. In the regime where colors are distributed evenly, i.e. every color has score $s(v,c)=\frac{\Delta}{c}$, the normalization constant simplifies to $Z_{even}=Ce^{\epsilon(\Delta-\Delta/C)/2}$. Compare this to the regime where every neighbor has the same color, and all other colors are unused, we have $Z_{concentrated}=1+(C-1)e^{\epsilon\Delta/2}$. We note that $\frac{Z_{concentrated}}{Z_{even}}\approx \frac{(C-1)e^{\epsilon\Delta/2}}{Ce^{\epsilon(\Delta-\Delta/C)/2}} = \frac{(C-1)e^{\epsilon\Delta/2C}}{C}$. Given that $C=O(\frac{\Delta}{\log n})$, we have that $e^{\epsilon\Delta/2C}=e^{\epsilon \log n/2} = n^{\epsilon/2}$. Thus, an even distribution gives a much smaller normalization constant. Plugging this back in, $p\le \frac{e^{\epsilon(\Delta-k)/2}}{C(e^{-\epsilon(\Delta-\Delta/C)/2})}$. This simplifies to $p \le \frac{e^{\epsilon(\Delta-k)/2}}{C e^{\epsilon(\Delta-\frac{\Delta}{C})/2}}=\frac{1}{C e^{(k-\frac{\Delta}{C})\epsilon/2}}$.

\end{proof}

\begin{theorem}
Algorithm~\ref{alg:dp-exp} achieves $O(\log n/\epsilon+d)$-defectiveness for $d$-inductive graphs.
\end{theorem}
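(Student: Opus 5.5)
We process the vertices in Line 5 in the $d$-inductive (degeneracy) order $v_1,\dots,v_n$, where each $v_i$ has at most $d$ neighbors among $v_1,\dots,v_{i-1}$. The processing order is independent of the graph in the privacy analysis, so this choice is allowed. Fix a vertex $v=v_i$ and split its final defect into two parts. The first part is the number of \emph{earlier} neighbors $v_j$, $j<i$, whose final color equals $c(v)$; this is at most $d$ deterministically. The second part is the number of \emph{later} neighbors $v_j$, $j>i$, that choose the color $c(v)$ when they are resampled. It therefore suffices to show that, with high probability, (a) at the moment $v$ is resampled its chosen color has only $O(\log n/\epsilon)$ neighbors of that color, and (b) only $O(\log n/\epsilon + d)$ later neighbors subsequently pick $c(v)$.

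For (a), we first use the Laplace tail bound to get $\tilde\Delta \ge \Delta - O(\log n/\epsilon)$ with probability $1-1/n^2$, so that $C=\Theta(\Delta/\log n + 1/\epsilon)$ and $\Delta/C = O(\log n)$. Next, we apply Lemma~\ref{lemma:initial} to $v$ and each color $c$ with $s(v,c)=k$. The probability that $v$ picks a color of multiplicity $k \ge \Delta/C + \frac{2}{\epsilon}(3\ln n)$ is at most $C\cdot \frac{C}{\Delta}n^{-3}$ after a union bound over the $C$ colors. Equivalently, we can invoke Lemma~\ref{lemma:utility} with $GS=1$ and $|H|=C$, using the fact that some color has multiplicity at most $\Delta/C$. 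Both routes give that, with probability $1-n^{-2}$, $v$ picks a color with at most $O(\log n + \log n/\epsilon)$ neighbors at that time. A union bound over all $v$ then gives (a) simultaneously for every vertex.

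For (b), condition on everything up to the step of $v$. Each later neighbor $w$ resamples its color from the exponential mechanism. Color $c(v)$ already appears at least once among $w$'s neighbors (namely at $v$), but that alone does not prevent $w$ from choosing it. So we bound $\Pr[w \text{ picks } c(v)]$ directly from Lemma~\ref{lemma:initial}. In the worst case ($k$ small) this probability is $O(1/C)\cdot e^{\epsilon\Delta/(2C)}$; we need it to be $O(1/C)$ up to constants, which holds when $\epsilon\Delta/C = O(1)$ for the chosen palette (this regime should be argued via the $1/\epsilon$ term in $C$). The expected number of later neighbors picking $c(v)$ is then $\deg(v)\cdot O(1/C)=O(\log n)$.

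These choices are sequential rather than independent, so we will bound them with a martingale/stochastic-domination version of Lemma~\ref{lemma:chernoff}: each later neighbor's indicator is dominated by an independent Bernoulli with parameter $O(1/C)$ given the history. Chernoff then yields $O(\log n)$ such neighbors with probability $1-n^{-2}$. A final union bound over $v$ gives total defect $d + O(\log n/\epsilon) + O(\log n) = O(\log n/\epsilon + d)$.

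The main obstacle is step (b): controlling the per-step probability that a later neighbor lands on $c(v)$ uniformly over adaptive histories. Lemma~\ref{lemma:initial} only gives a bound scaled by $e^{\epsilon\Delta/(2C)}$, which may be polynomial in $n$. If that factor cannot be absorbed, I would fall back on a cruder argument. A later neighbor $w$ picking $c(v)$ makes $c(v)$ part of $w$'s own defect, which by (a) is already $O(\log n/\epsilon)$. The remaining task would then be to bound how many later neighbors of $v$ can choose $c(v)$ without that color becoming overloaded in their own neighborhoods, using the $d$-inductive structure to charge these events.
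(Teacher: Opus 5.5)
There is a genuine gap, and it comes from processing the vertices in the wrong direction. In the forward $d$-inductive order $v_1,\dots,v_n$, the bound of at most $d$ neighbors applies to the neighbors that are \emph{already finalized} when $v$ is resampled. The neighbors processed after $v$, which are exactly the ones that can still move into $c(v)$, can number up to $\Delta$.

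As a result, your step (a) does no work. The count $s(v,c(v))$ at $v$'s step is taken over the tentative colors of later neighbors, and all of those are subsequently resampled. So your defect bound is really just $d$ plus whatever (b) gives, and everything rests on (b), which is not established:
\begin{itemize}
\item Lemma~\ref{lemma:initial} only gives a per-neighbor probability of order $\frac{1}{C}e^{\epsilon\Delta/(2C)}=\frac{1}{C}n^{\epsilon/2}$.
\item The palette choice does not make $\epsilon\Delta/C=O(1)$. With $C\ge \Delta/\log n$, or $C\approx 1/\epsilon\ge\Delta/\log n$, you only get $\epsilon\Delta/C=O(\epsilon\log n)$, which is not $O(1)$ for constant $\epsilon$.
\item The domination by independent Bernoulli$(O(1/C))$ variables given the history is asserted, not proved, and it is genuinely delicate. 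Your vertex $v$ chose $c(v)$ because it was rare among its neighbors' tentative colors. A later neighbor $w$ that shares later neighbors with $v$ therefore tends to see $c(v)$ as rare too, which biases $w$ toward $c(v)$.
\end{itemize}
Your fallback does not close the gap either. Each later neighbor $w$ having small defect of its own says nothing about how many such $w$ pick $c(v)$, since each contributes only $1$ (via $v$) to its own count.

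The missing idea, which is what the paper uses, is to resample in the \emph{reverse} of the $d$-inductive ordering. Then, when $v=v_i$ is processed, all of its neighbors except those in $\{v_1,\dots,v_{i-1}\}$ already carry their final colors, and there are at most $d$ exceptions. Hence the final defect of $v$ is at most $s(v,c(v))+d$, with $s$ measured at $v$'s step. Your step (a) bounds $s(v,c(v))$ by $O(\log n/\epsilon)$ with high probability, using Lemma~\ref{lemma:initial} or Lemma~\ref{lemma:utility} with a union bound over colors and vertices. This gives $O(\log n/\epsilon+d)$ with no probabilistic control of later neighbors, so step (b) is unnecessary.

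Separately, your remark that the order is ``independent of the graph'' is not accurate: a $d$-inductive order is computed from $G$. The paper glosses over the same point.
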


\begin{proof}
Using Lemma~\ref{lemma:initial}, we can achieve the following bound on defectiveness. 
The probability that a vertex chooses color \(c\) if \(k\) of its neighbors are already colored \(c\) 
is at most 
\[
    \frac{1}{C e^{(k - \Delta / c)\epsilon}}.
\]
Fix a vertex $v$, there are at most $\Delta/k$ potential colors with defectiveness $k$, we want to ensure
\begin{align*}
    &\Pr[\text{v chooses $c$ with $k$ defectiveness}]\\
    = & \frac{\Delta}{k} \times 
      \frac{1}{C e^{(k - \Delta / C)\epsilon / 2}}
    \le 
    \frac{1}{n^2},    
\end{align*}

so that we can apply a union bound on all vertices.

Assume our palette size is \(C = \frac{\Delta}{\log n}\). 
Plugging in this value and solving for \(k\) gives
\[
    k = O\!\left(\frac{\log n}{\epsilon}\right)
\]
as the maximum defectiveness.

However, this bound does not account for later neighbors 
that may resample into the same permanent color after a vertex has resampled. 
Given a \(d\)-inductive graph, there exists an ordering 
\(v_1, v_2, \ldots, v_n\) such that every 
\(v_i \in V(G)\) has at most \(d\) neighbors in 
\(\{v_1, v_2, \ldots, v_{i-1}\}\). 
If we sample the vertices in the reverse of this ordering, 
then only \(d\)-neighbors may resample into the same permanent color 
after the initial resampling. 
Thus, we have our bound.
\end{proof}

Notice that the above result has one large limitation: the need for a d-inductive ordering.
This from the fact that it is difficult the bound the number of neighbors that resample into a color after a vertex has resampled into a permanent color.
We would like an algorithm that generalizes to all graphs, not $d$-inductive ones. 

One way to control the number of later neighbors that are resampled is by only sampling vertices that experience a "bad" initial sampling. 
Thus, we propose Algorithm~\ref{alg:dp-exp-threshold}, which selectively resample vertices which violate a noisy threshold, and therefore controls the number of neighbors that contribute the defectiveness after initial resampling. 

\subsection{Controlled re-sampling algorithm}

Let $def(v)$ be the defectiveness of $v$ (the number of neighbors that share its color). We shall now design and analyze the second algorithm and its guaranteed defectiveness.

\begin{algorithm}
	\caption{Private coloring by controlled resampling mechanism $M_{Control}$\\
	\textbf{Input: } Graph $G=(V,E)$, privacy parameter $\epsilon$\\
	\textbf{Output: } Coloring $c:V \to [C]$}
	\begin{algorithmic}[1]
	\STATE $\tilde{\Delta}\gets\Delta_G+\text{Lap}(\frac{1}{\epsilon})$
	\STATE $C\gets \frac{\tilde{\Delta}}{\log |V|}$
    \STATE $c \gets \{\, v \mapsto \text{Uniform}([C]) \mid v \in V \,\}$
 	\STATE \label{algl:thresh}$\tilde{T}\gets 5\log n +\frac{2\log n + \log \Delta}{\epsilon}$
    \STATE Define $s(v,k):=\left|\{u\in N(v)\mid c(u)=k\}\right|$
    \STATE Define $\text{def}(v):=s(v,c(v))$
    \STATE $V'\gets\emptyset$
	\FOR{$v\in V$}
		\STATE $\tilde{d}(v)=\text{def}(v)+\text{Lap}(\frac{1}{\epsilon})$
		\IF{$\tilde{d}(v) >\tilde{T}$}
			\STATE $V'\gets V'\cup \{v\}$
            \STATE \label{algl:recolThresh}$c'(v)\gets$ sample $k\in [C]$ with probability $\propto \exp(-\epsilon s(v,k))$
		\ENDIF
	\ENDFOR
    \STATE Assign final coloring $\forall v\in V', c(v)\gets c'(v)$
	\end{algorithmic}
	\label{alg:dp-exp-threshold}
\end{algorithm}
\begin{theorem}
	\label{thrm:exp-priv}
	Algorithm~\ref{alg:dp-exp-threshold} ($M_{control}$) is $5\epsilon$-differentially private.
\end{theorem}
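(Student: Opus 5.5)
Following the privacy proof of Algorithm~\ref{alg:dp-exp}, I will fix neighboring graphs $G\sim G'$ with $E(G)=E(G')\cup\{(u,u')\}$ and an output coloring $\mathbf{c}$, and bound $\Pr[M(G)=\mathbf{c}]/\Pr[M(G')=\mathbf{c}]$ by conditioning on every graph-independent random choice. The palette size $a$ comes from the Laplace mechanism on $\Delta_G$, which has sensitivity $1$; as before, this costs a factor $e^{\epsilon}$. Conditioned on $a$, the tentative coloring $c_0$ is uniform and independent of the graph. The threshold $\tilde T$ is a public constant. So I will fix $a$ and $c_0$ and show that the remaining randomness contributes at most $e^{4\epsilon}$.

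The key structural fact is that in Algorithm~\ref{alg:dp-exp-threshold}, unlike Algorithm~\ref{alg:dp-exp}, all scores $s(v,k)$ and $\mathrm{def}(v)$ are computed from the fixed tentative coloring $c_0$. The new colors $c'$ are only assigned at the end. Given $c_0$, the per-vertex decisions are therefore mutually independent across $v$: whether $v$ enters $V'$, and which $c'(v)$ it gets. The output $\mathbf{c}$ equals $c_0(v)$ on $V\setminus V'$ and $c'(v)$ on $V'$. So I can write
\[
\Pr[M(G)=\mathbf{c}\mid a,c_0]=\prod_{v\in V} p_v^G,
\]
where $p_v^G$ is the probability that vertex $v$ ends with color $c_v$. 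Explicitly,
\[
p_v^G=\Pr[\tilde d(v)\le \tilde T]\,\mathbf{1}[c_v=c_0(v)]+\Pr[\tilde d(v)>\tilde T]\,\Pr_G[c'(v)=c_v].
\]
For $v\notin\{u,u'\}$, both $\mathrm{def}(v)$ and $s(v,\cdot)$ are identical in $G$ and $G'$, so $p_v^G=p_v^{G'}$.

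For $v=u$ (and symmetrically $u'$), only the neighbor $u'$ is added. So $\mathrm{def}(u)$ changes by at most $1$, and $s(u,k)$ changes by at most $1$, and only for $k=c_0(u')$. I will bound the two ingredients separately. Laplace noise of scale $1/\epsilon$ on a sensitivity-$1$ quantity makes both $\Pr[\tilde d(u)>\tilde T]$ and its complement change by a factor at most $e^{\epsilon}$. For the resampling step, I need to check the exponential-mechanism ratio with weight $\exp(-\epsilon s)$. Since only one color's score shifts, and by at most $1$ in a single direction, the numerator moves by at most $e^{\epsilon}$ while the normalizer moves by at most $e^{\epsilon}$ in the opposite sense. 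This could naively give $e^{2\epsilon}$. However, when the shifted color is the chosen one, numerator and denominator move in the same direction and partially cancel, and one can show that the ratio of $\exp(-\epsilon s_c)/\sum_j\exp(-\epsilon s_j)$ is bounded by $e^{\epsilon}$ under a monotone one-coordinate shift.

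Because $p_u$ is a mixture of two terms, each changing by a factor at most $e^{2\epsilon}$, $p_u$ itself changes by at most $e^{2\epsilon}$; the same holds for $u'$. Multiplying the contributions of $u$ and $u'$ gives $e^{4\epsilon}$. Combining with the palette-size factor $e^{\epsilon}$ and summing over $a$ and $c_0$, as in Equation~\ref{eq:tot_prob}, yields the claimed $5\epsilon$ bound.

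The main obstacle I expect is this per-endpoint accounting: I must make sure the threshold test and the resampling for the same vertex are charged correctly without double counting. The clean way is the mixture bound above, which uses the fact that both branches are conditioned on the same fixed $c_0$. I also need to handle the case $c_v=c_0(v)$, where both branches contribute to $p_v$; the mixture argument covers this uniformly.
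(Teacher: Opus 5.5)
Your proof is correct and reaches the same $\epsilon+2\epsilon+2\epsilon$ accounting as the paper, but by a different route.

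\textbf{The paper's route.} It argues by composition. It charges $2\epsilon$ to the threshold tests at the two endpoints, then adds ``the privacy loss of Algorithm~\ref{alg:dp-exp}'' ($3\epsilon$), asserting that line~\ref{algl:recolThresh} is the same exponential mechanism.

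\textbf{Your route.} You condition on the noisy degree and on $c_0$. You use the fact that every score is read off the frozen tentative coloring, so the per-vertex events are independent. You then bound each endpoint's factor as a two-branch mixture, which handles the case $c_v=c_0(v)$ explicitly rather than through post-processing.

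\textbf{The resampling step.} This is the more substantive difference. Algorithm~\ref{alg:dp-exp-threshold} uses weights $\exp(-\epsilon s(v,k))$, not $\exp(-\epsilon s(v,k)/2)$ as in Algorithm~\ref{alg:dp-exp}. The generic exponential-mechanism guarantee therefore gives only $2\epsilon$ per endpoint, and composing generically would yield $7\epsilon$. The paper's appeal to Algorithm~\ref{alg:dp-exp} glosses over this. Your one-coordinate monotonicity observation is exactly what keeps this step at $\epsilon$ per endpoint. Let $Z$ and $Z'$ be the normalizers in $G$ and $G'$; since only one weight shrinks, by a factor $e^{-\epsilon}$, we have $Z\le Z'\le e^{\epsilon}Z$. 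The selection ratio $\Pr_G/\Pr_{G'}$ is then $Z'/Z\in[1,e^{\epsilon}]$ for an unshifted color and $e^{-\epsilon}Z'/Z\in[e^{-\epsilon},1]$ for the shifted one. State both cases explicitly in the write-up.

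So your argument actually justifies the constant $5$, whereas the paper's sketch does not. The composition route buys brevity and modularity; it would also cover publishing the recoloring indicators. Yours buys a self-contained, tight per-output bound.

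\textbf{An assumption to state.} You assert that $\tilde T$ is a public constant, but line~\ref{algl:thresh} contains $\log\Delta$, and in the paper's notation $\Delta=\Delta_G$ is private. Treating $\tilde T$ as public, as the paper's proof also silently does, requires reading this $\Delta$ as $\tilde\Delta$ or as a public bound. In that case, condition on $\tilde\Delta$ itself rather than only on the palette size $a$; the two coincide only if $C$ is not rounded. With the true $\Delta_G$, the threshold would differ between $G$ and $G'$ for every vertex. Neither your argument nor the paper's would then localize the privacy loss to $u$ and $u'$.
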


\begin{proof}
The proof of privacy for Algorithm~\ref{alg:dp-exp-threshold} can be shown by the composition of the privacy guarantees of all steps.


Note that Algorithm~\ref{alg:dp-exp-threshold} applies the same Exponential mechanism as Algorithm~\ref{alg:dp-exp} in parallel across a subset of the vertices (line~\ref{algl:recolThresh}).
The only additional source of randomness is when determine if each vertex requires recoloring (line~\ref{algl:thresh}). Let $q_i$ be the indicator that vertex $v_i$ requires recoloring (if $\tilde{d}(v_i) > \tilde{T}$).

Fix any two neighboring graphs $G$ and $G'$, (with $(u,v)$ and $(u',v')$ being the differed edge in $G$ and $G'$ respectively), an identical initial coloring over these graphs $c_1, \cdots, c_n$.

For $s_i \in \{0,1\}$, Note that $Pr[q(v_i)=s_i]$ is identical in $G$ and $G'$ for all vertex $v_i\notin \{u, v\}$, since $v_i$ has the same neighbors in $G$ and $G'$.
For $u$ and $v$,  $Pr[q(u)=s_i] \le e^\epsilon Pr[q(u')=s_i]$ by the property of the Laplace mechanism (the addition or removal of a single neighbor can change $\text{def}(u)$ by at most $1$).
Thus, this mechanism is $2\epsilon$-differentially private.

Combining this with the privacy loss of Algorithm \ref{alg:dp-exp} via composition, we show that Algorithm \ref{alg:dp-exp-threshold} is $5\epsilon$-differentially private.
\end{proof}

On the other hand, the guarantee for the defectiveness by the algorithm is much more involved.
To do so, we shall first make use of the following lemma, that states the upper bound of the probabilities that vertex are recolored.

\begin{lemma}
	Each vertex $v\in V$ requires recoloring with probability at most $p=\frac{2 \log n}{\Delta n^2}$
	\label{lemma:reco}
\end{lemma}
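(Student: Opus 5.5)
We bound the probability that $\tilde{d}(v) > \tilde{T}$ by splitting the event according to whether the true defectiveness $\text{def}(v)$ is large or the Laplace noise is large. Write $\tilde{T} = T_1 + T_2$ with $T_1 = 5\log n$ and $T_2 = \frac{2\log n + \log\Delta}{\epsilon}$. Then
\[
\Pr[\tilde{d}(v) > \tilde{T}] \le \Pr[\text{def}(v) > T_1] + \Pr[\text{Lap}(1/\epsilon) > T_2],
\]
since if neither bad event occurs the sum stays below $\tilde{T}$. We treat the two terms separately and add them.

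\textbf{Noise term.} For $Z\sim\text{Lap}(1/\epsilon)$ we have $\Pr[Z > t] = \tfrac12 e^{-\epsilon t}$. Plugging in $t = T_2$ gives
\[
\Pr[Z > T_2] = \tfrac12 e^{-(2\log n + \log\Delta)} = \frac{1}{2\Delta n^2},
\]
which is at most $\frac{\log n}{\Delta n^2}$ for $n\ge 2$ (natural logarithms assumed).

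\textbf{Defect term.} The defect is measured at the initial uniform coloring of line 3, since recoloring happens only after the thresholding decisions, which use the initial $c$. Conditioned on $c(v)$, each of the $\deg(v)\le\Delta$ neighbours independently shares $v$'s colour with probability $1/C$. Hence $\text{def}(v)$ is a sum of independent Bernoullis with mean
\[
\mu \le \Delta/C = \log n \cdot \Delta/\tilde{\Delta}.
\]
Treating $\tilde{\Delta}\approx \Delta$, so that $\mu\le\log n$, we apply Lemma~\ref{lemma:chernoff} with $(1+\eta)\mu = 5\log n$. We may pad the sum with dummy variables so that the mean is exactly $\log n$, which is monotone-safe; then $\eta = 4$ and
\[
\Pr[\text{def}(v)\ge 5\log n] \le e^{-16\log n/6} = n^{-8/3}.
\]
This is at most $\frac{\log n}{\Delta n^2}$ as long as $\Delta \le n^{2/3}\log n$. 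Otherwise we would tighten the estimate using the standard form $(e^\eta/(1+\eta)^{1+\eta})^\mu$, which gives $n^{-(5\ln 5-4)}\approx n^{-4.05}\le \frac{1}{\Delta n^2}$ since $\Delta\le n$. Adding the two terms yields $p \le \frac{2\log n}{\Delta n^2}$, and indeed the sum is at most $\frac{1}{\Delta n^2}+\frac{1}{\Delta n^2}$.

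\textbf{Main obstacle.} The delicate point is the dependence of $C$ on the noisy $\tilde{\Delta}$. If $\tilde{\Delta} < \Delta$, the mean $\Delta/C$ exceeds $\log n$. I would handle this by conditioning on $|\tilde{\Delta}-\Delta| \le \frac{2\log n}{\epsilon}$. This event fails with probability $n^{-2}$, a failure that can be absorbed into the high-probability statement or argued to be negligible relative to $\Delta$ when $\Delta \gg \log n/\epsilon$. On this event $\mu \le (1+o(1))\log n$, and the constant $5$ leaves enough slack in the Chernoff exponent. Alternatively, the lemma can be stated conditioned on the palette size satisfying $C\ge\Delta/\log n$, as the paper implicitly does in Lemma~\ref{lemma:initial}.
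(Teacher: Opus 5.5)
Your argument follows the paper's route: split $\tilde T$ into $5\log n$ and $\frac{2\log n+\log\Delta}{\epsilon}$, apply a union bound, and control the initial defect by Chernoff and the noise by the Laplace tail. Your Chernoff step is more careful than the paper's, which sets $\delta=5$ and so actually bounds $\Pr[V_1\ge 6\log n]$; you correctly take $\eta=4$, note that the simplified form then gives only $n^{-8/3}$, and recover $n^{-(5\ln 5-4)}$ from the sharper form. Like the paper, which silently takes $\Delta/C=\log n$, you ultimately need $\Delta/C\le(1+o(1))\log n$, and stating the lemma conditionally on the palette is the right fix. Conditioning on $|\tilde\Delta-\Delta|\le 2\log n/\epsilon$ costs $n^{-2}$, which exceeds the per-vertex target $\frac{2\log n}{\Delta n^2}$, so it can only be absorbed in the later high-probability theorem. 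A one-sided deviation of $\frac{2\log n+\log\Delta}{\epsilon}$, costing $\frac{1}{2\Delta n^2}$, together with $\Delta\gg\log n/\epsilon$, would fit inside the lemma itself.
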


\begin{proof}
Observe that the noisy defectiveness $\tilde{d}(v)$ contains two terms: the number of neighbors that were initialized to its color and the Laplacian noise generated by the Laplace mechanism.
The threshold can also be broken down into two corresponding parts, 
\begin{align*}
5\log n, \frac{2\log n + \log \Delta}{\epsilon}
\end{align*}
Hence $v$ will only be recolored if at least one of the terms of $\tilde{d}(v)$ exceeds its corresponding term in the threshold, which we shall show occurs with probability at most $\frac{\log n}{\Delta n^2}$ for each (union bound over the two events gives the desired $\frac{2\log n}{\Delta n^2}$).

For an arbitrary vertex v, Let $V_1 $ be the number of neighbors that were initialized to the same color as v.
Since the initial coloring is chosen uniformly, the expected number of neighbors that were initialized as the same color is $\mathbb{E}[V_1]= \Delta/c=\log n$. Using Chernoff bounds gives 
	\begin{align*}
		\Pr[V_1\geq (1+\delta)\log n] &\leq \exp \left( -\frac{\delta^2\log n}  {2+\delta} \right) \\
										&\leq  \frac{\log n}{\Delta n^2}
	\end{align*}

    Setting $\delta =5$, we get $\Pr[V_1\geq 5 \log n] \leq \exp \left( -25\log n/7 \right) \le \frac{\log n}{\Delta n^2}$ as we are working in the regime where $\log n < \Delta$.
    Thus our threshold is set at 
    \begin{align}
    5 \log n
    \label{eq:thresh}
    \end{align}
	By the tail bound for Laplace noise with scale \(1/\epsilon\), we have
    \begin{align*}
    \Pr[Z \ge t]
        = \frac{1}{2}\exp(-\epsilon t).
    \end{align*}
    Setting this probability to be at most \(\frac{\log n}{\Delta n^2}\), it suffices to choose \(t\) such that
    \begin{align*}
    \frac{1}{2}\exp(-\epsilon t)
        &\leq \frac{\log n}{\Delta n^2}.
    \end{align*}
    Solving for \(t\), we obtain
    \begin{align*}
    t
        &\geq \frac{2\log n + \log \Delta - \log(2\log n)}{\epsilon}.
    \end{align*}
    For simplicity, we set the threshold at
    \begin{align}
    \frac{2\log n + \log \Delta}{\epsilon}.
    \label{eq:thresh}
    \end{align}

    Union bounding over the the initial defectiveness exceeding \eqref{eq:thresh} and the Laplace noise exceeding $\frac{2\log n+\log \Delta}{\epsilon}$ returns $\frac{2\log n}{\Delta n^2}$.

\end{proof}


We prove the following defectiveness of our algorithm using Lemma~\ref{lemma:reco} to bound the number of neighbors that recolor:
\begin{theorem}
	\label{thrm:exp-util}
	Algorithm \ref{alg:dp-exp-threshold} will produce a coloring with defectiveness
\[
\max \left\{
\begin{aligned}
&5\log n + \frac{2\log n + \log \Delta}{\epsilon}  + \frac{2\log n}{\epsilon},
\\
&(1+\tfrac{4}{\epsilon})\log n
+ \tfrac{2}{\epsilon}\log \Delta
\end{aligned}
\right\}
+ \log n
\]

	with probability at least $1 - 3/n$.
\end{theorem}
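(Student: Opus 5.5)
We will bound the final defect of an arbitrary vertex $v$ by splitting on whether $v$ is in $V'$ (that is, recolored in line~\ref{algl:recolThresh}). In each case we bound the contribution from neighbors that keep their initial color. Separately, we bound the number of neighbors that are recolored into $v$'s final color. That second quantity is the additive $+\log n$ term. Every high-probability event will be made to fail with probability at most $1/n^2$ per vertex, so that a union bound over $v$ gives total failure probability at most $3/n$, one $1/n$ for each of three event families.

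\emph{Case 1: $v\notin V'$.} Then $\tilde d(v)\le \tilde T$, so $\text{def}(v)\le \tilde T - Z_v$, where $Z_v\sim \text{Lap}(1/\epsilon)$. The Laplace tail gives $-Z_v\le \frac{2\log n}{\epsilon}$ except with probability $\frac12 n^{-2}$. Hence the initial defect of $v$ is at most $5\log n+\frac{2\log n+\log\Delta}{\epsilon}+\frac{2\log n}{\epsilon}$, which is the first branch of the maximum. Neighbors that also stayed unrecolored keep their colors, so they contribute at most this amount to the final defect.

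\emph{Case 2: $v\in V'$.} Here $v$ resamples using the exponential mechanism with weights $\exp(-\epsilon s(v,k))$, where $s$ is computed on the initial coloring. We apply the argument of Lemma~\ref{lemma:initial}, now with $\epsilon$ in place of $\epsilon/2$, and $C=\Delta/\log n$ so that $\Delta/C=\log n$. A color carrying $k$ initial neighbors is then chosen with probability at most $\frac{1}{C}e^{-(k-\log n)\epsilon}$. There are at most $C$ colors, so $v$ picks some color with $k$ initial neighbors with probability at most $e^{-(k-\log n)\epsilon}$. Setting this to $1/(\Delta n^2)$ yields $k\le \log n+\frac{2\log n+\log\Delta}{\epsilon}$. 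To match the stated second branch $(1+\frac4\epsilon)\log n+\frac2\epsilon\log\Delta$, we will instead target failure probability $1/(\Delta n^2)^2$, or equivalently absorb the factor from the $\Delta/k$ candidate count as in the proof for Algorithm~\ref{alg:dp-exp}. We will tune the constant so that it matches. Again this counts only neighbors that keep their initial color.

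\emph{Recolored neighbors.} This is the main obstacle. Recolored neighbors may land on $v$'s final color, and the recolorings are not independent of one another, since they all depend on the initial coloring. Our plan is to bound the count crudely: the number of recolored neighbors upper bounds their contribution. By Lemma~\ref{lemma:reco}, each neighbor lies in $V'$ with probability at most $p=\frac{2\log n}{\Delta n^2}$. The expected number of recolored neighbors of $v$ is therefore at most $\Delta p=\frac{2\log n}{n^2}\ll 1$. Markov's inequality alone is too weak, and a Chernoff bound needs independence. We would therefore note that, conditioned on the threshold events, the Laplace noises are independent across vertices. That term is the clean one, and the part coming from initial-color concentration can be handled by a union bound. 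Having $\log n$ or more recolored neighbors requires a specific set of $\log n$ neighbors all to fail. The probability of this is at most $\binom{\Delta}{\log n}p^{\log n}\le (\Delta p)^{\log n}\le n^{-2}$, where we use the fact that the failure events are upper bounded by per-vertex events that are independent: the Laplace noise, together with the Chernoff event treated through a union bound.

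Finally, we sum the non-recolored contribution, which is the maximum of the two branches, with the at most $\log n$ recolored neighbors. A union bound over the three event families (the Laplace tail in Case 1, the exponential-mechanism tail in Case 2, and the recolored-neighbor count) and over all $n$ vertices gives the stated success probability of at least $1-3/n$.
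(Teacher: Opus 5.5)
Your argument is correct. It uses the same decomposition as the paper: a Laplace tail when $v\notin V'$, an exponential-mechanism tail when $v\in V'$, and a separate bound of at most $\log n$ on recolored neighbors. Two of these steps take a different route.

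For the recolored neighbors, the paper simply applies Markov's inequality. By linearity of expectation and Lemma~\ref{lemma:reco}, $\E[S]\le \Delta p = 2\log n/n^2$, so $\Pr[S\ge \log n]\le 2/n^2$. Markov needs no independence, so the dependence you call the main obstacle never arises, and Markov is not "too weak." It gives $2/n^2$ rather than your per-family $1/n^2$, but the slack in the other two tails absorbs this: $\frac{1}{2n^2}+\frac{1}{n^4\Delta^2}+\frac{2}{n^2}\le \frac{3}{n^2}$. The paper's own tally of $1+1+2$ per $n^2$ actually yields $4/n$, not the stated $3/n$. Your substitute argument is valid, provided the initial-defect (Chernoff) events are removed by one global union bound, charged once. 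Charging them per neighbor of each $v$ would cost $\log n/n$ in total with the bound of Lemma~\ref{lemma:reco}. After that removal, $u\in V'$ forces the independent events $Z_u>(2\log n+\log\Delta)/\epsilon$, and the union over $\log n$-subsets gives a super-polynomially small failure probability. This buys a much stronger tail for that event, but it is more machinery than needed, and your write-up of the decoupling is vague.

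For the recolored vertex itself, the paper invokes the generic utility bound (Lemma~\ref{lemma:utility}) with $OPT=-\log n$, $|R|=C$, $t=2\log n$ and the factor $2/\epsilon$. That factor is loose, since Algorithm~\ref{alg:dp-exp-threshold} uses weights $\exp(-\epsilon s)$. You instead bound the normalizer directly. Justify that step by convexity, $\sum_{j} e^{-\epsilon s(v,j)}\ge C e^{-\epsilon \deg(v)/C}$, rather than by the heuristic comparison in the proof of Lemma~\ref{lemma:initial}. Your resulting bound $\log n+\frac{2\log n+\log\Delta}{\epsilon}$ is already below the stated second branch, so no tuning of constants is needed.
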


\begin{proof}
	We begin by noting that if a vertex $v$ has defectiveness $def(v)\geq 5\log n +\frac{2\log n + \log \Delta}{\epsilon}+\frac{2\log n}{\epsilon}$, it will be recolored unless the Laplace noise exceeds $\frac{2\log n}{\epsilon}$, which occurs with probability at most $1/n^2$ by the definition of Laplacian distribution.

	On the other hand, if a vertex is recolored, then using a result from the Exponential mechanism, with probability $1-e^{-t}$, it will choose an option with the utility score (which is negative of defectiveness):
	\[
		OPT - \frac{2\Delta u}{\epsilon}(\ln |R|+t),
	\]
	where $OPT$ is the score of the optimal choice ($-\log n$ here), and $\Delta_u$ is the sensitivity of the score function (in this case it's $1$), $|R|$ is the number of choices (palette size, $c=\Delta/\log n$).
    Choosing $t=2\log n$ means a defectiveness of $(1+\frac{4}{\epsilon})\log n +\frac{2}{\epsilon}\log \Delta$ except with probability $1/n^2$.

	Hence $v$ will have the ``defectiveness'' at most 
    \begin{align*}
\max \left(
\begin{aligned}
&5\log n
+ \frac{2\log n + \log \Delta}{\epsilon}+\frac{2\log n}{\epsilon}
\\
&(1+\tfrac{4}{\epsilon})\log n
+ \tfrac{2}{\epsilon}\log \Delta
\end{aligned}
\right)
\end{align*}
    with probability at least $1-2/n^2$ (here ``defectiveness'' does not take into account $v$'s neighbors may be recolored, which we shall consider next).

	However, $v$'s neighbors may be recolored as well.
    Model each neighbor $i$ as a random variable that takes value $1$ with probability $\frac{2\log n}{\Delta n^2}$ and $0$ otherwise.
    Define the number of neighbors that recolor as $S=\sum_{i=1}^\Delta X_i$.
    Using Markov bound with $\mathbb{E}[S] =\frac{2\log n}{n^2}$ we get
    \begin{align*}
		\Pr[S \geq \log n] &\leq \frac{2 \log n}{n^2 \log n}
        \\
										 &= 2/n^2
	\end{align*}

    Thus, at most $\log n$ neighbors of $v$ get recolored unless with probability $2/n^2$.
	Union bounding over all three events: (1) the vertex being above the threshold but no recolored, (2) the exponential mechanism giving a bad coloring, and (3) the number of neighbors recoloring is large, gives that a vertex v will have defectiveness
    \[
\max \left(
\begin{aligned}
& 5\log n + \frac{2\log n +\log \Delta}{\epsilon}+\frac{2\log n}{\epsilon},
\\
&(1+\tfrac{4}{\epsilon})\log n
+ \tfrac{2}{\epsilon}\log \Delta
\end{aligned}
\right)
+ \log n
\]
    except with probability $\frac{4}{n^2}$, and applying union bound over all $n$ vertices, results in the aforementioned max defectiveness across the graph except with probability $4/n$ and the Theorem follows.
\end{proof}

\begin{remark}
    Algorithm~\ref{alg:dp-exp-threshold} has stronger theoretical guarantees than Algorithm~\ref{alg:dp-exp} as it provides the same asymptotic guarantees while generalizing to non $d$-inductive graphs. It works best when there is an ample privacy budget, as shown in the theoretical bounds we have shown above. However, when under a small $\epsilon$, random initialization theoretically outperforms this algorithm, while using a much tighter privacy budget. The choice of algorithm would be a balance between privacy and defectiveness.
\end{remark}


\section{Experiments}
\label{sec:expr}
\begin{figure*}[ht]
    \centering
    \includegraphics[width=0.9\textwidth]{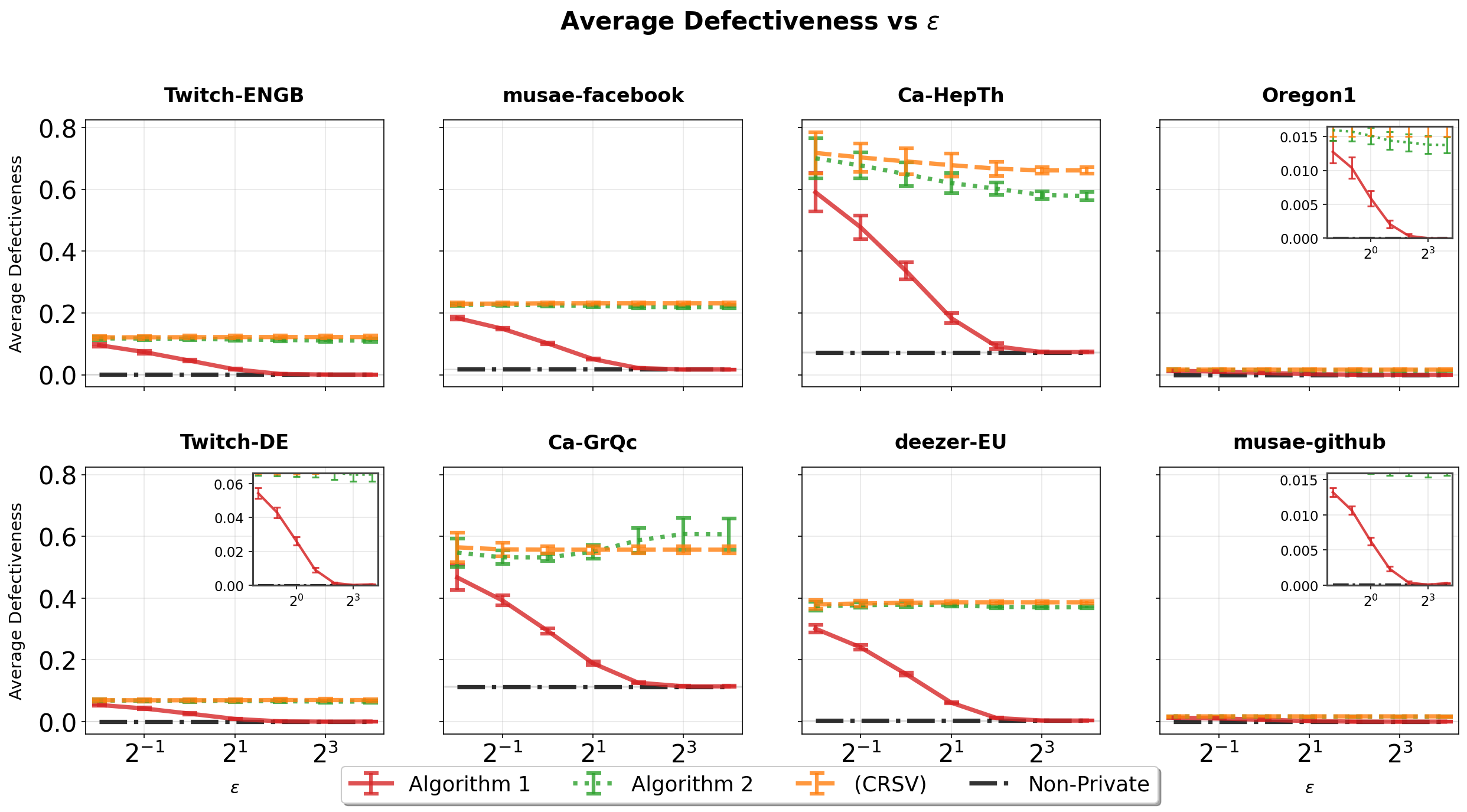}
    \caption{Average defectiveness as a function of $\epsilon$ on SNAP networks. For certain graphs where the lines are indistinguishable, we have provided a zoomed in figure in the top right corner with scaled axes.}
    \label{fig:avg-def-snap}
\end{figure*}

\begin{figure*}[ht]
    \centering
    \includegraphics[width=0.9\textwidth]{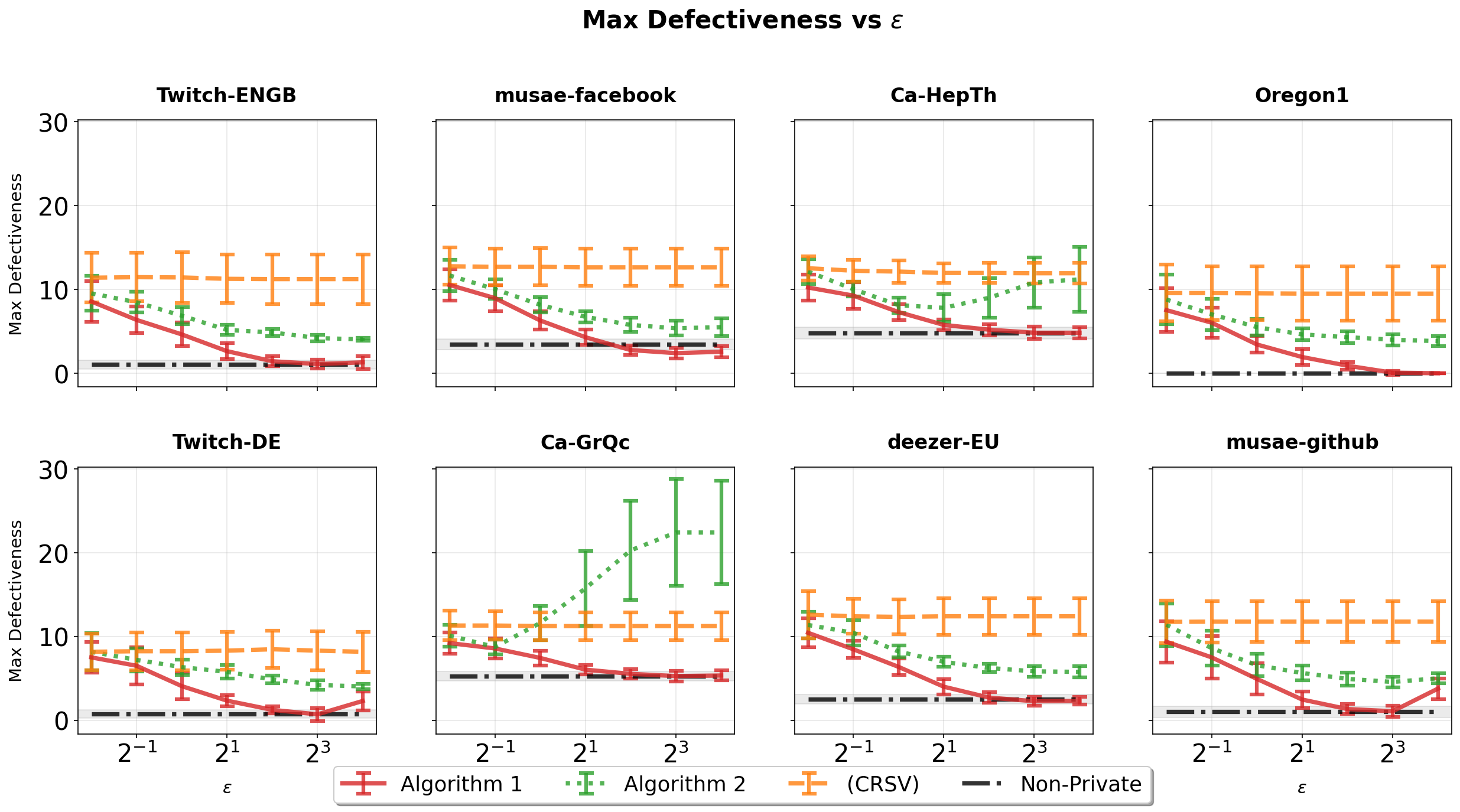}

    \caption{Maximum  defectiveness as a function of $\epsilon$ on SNAP networks.}
    \label{fig:max-def-snap}
\end{figure*}

\subsection{Setups and baselines. } We implement and evaluate several baseline algorithms, including a simple, non-private greedy coloring (Non-private) and the algorithm described by~\cite{christiansen2024private} (\textsc{CRSV}). The non-private algorithm iterates through each vertex and greedily picks the color least used by its neighbors. 
For our experiments, we calculate palette size $|C|=\frac{\Delta+Lap(\frac{1}{\epsilon})}{\log n}$, and use the same palette size for all four algorithms for fair and consistent comparisons. We take all invocations of $\log$ to mean the natural log. We also take the ceiling of the above expression when determining palette size. The palette size is recalculated at the start of every repetition, but within each repetition, the same palette size is shared amongst all algorithms.
We test these algorithms across different values of $\epsilon\in\{2^{-2}, 2^{-1},\ldots, 2^4\}$ (high to low privacy) in order to investigate the privacy-utility trade-off in different privacy settings. For clarity, some values are omitted when labeling the axis of the graphs.

\textbf{Metrics.} After coloring, we count the defectiveness  $def$ of each node $v$. For each network $G$, we report two statistics: the \emph{Average Defectiveness} = $\sum_{v\in V(G)}def(v)/|V(G)|$, and the \emph{Maximum Defectiveness} = $\max_{v\in V(G)} def(v)$. We measure the absolute value of the defectiveness for all of our graphs, and all families of graphs share the same y-axis.  

\textbf{Datasets. } We evaluate the performance of Algorithm~\ref{alg:dp-exp} ($M_{Unctr}$), Algorithm~\ref{alg:dp-exp-threshold} ($M_{Control}$),(\textsc{CRSV}), and a simple non-private greedy coloring (Non-private) on both real world and synthetic graphs. The real world graphs are obtained from the SNAP network repository ~\cite{leskovec2016snap} and are listed in Table~\ref{tab:placeholder_label}. 

For synthetic experiments, we consider both Erdős–Rényi random graphs and Barabási-Albert scale-free networks ~\cite{albert2002statistical}. Erdős–Rényi graphs are generated as $G(n,p)$, where $n$ is the number of vertices and $p$ is the probability of an edge existing between any pair of vertices. Barabási-Albert graphs are generated with $n$ vertices and attachment parameter $m$, where each newly added vertex connects to $m$ existing vertices according to preferential attachment. 

Each experiment, i.e., one instance of an algorithm with a specific combination of parameters ($\epsilon$ for private algorithms) and a specific dataset, is repeated $30$ times and reported by the averages of the statistics. The error bars in the graphs represent standard deviation of the metric over 30 runs. Each repetition is also seeded for reproducibility.
We note that the repeat of experiments is purely for the evaluation of performance.
In practice, multiple invocations of private algorithms incur additional privacy costs, scaling linearly to the number of invocations~\cite{dwork:fttcs14} for pure DP.
Note that the only effect of $\epsilon$ for both (\textsc{CRSV}) and (Non-private) is on the palette size, only in Algorithm~\ref{alg:dp-exp} and~\ref{alg:dp-exp-threshold} does the performance of the algorithm depend on $\epsilon$ directly. When using the original threshold value calculating by Algorithm~\ref{alg:dp-exp-threshold}, we also scaled our theoretical threshold of Line 4 by a factor of 0.1 for all SNAP graphs, and a factor of 0.3 for all synthetic graphs. Note that this does not effect the privacy guarantees of the algorithm, but increases vertex recoloring, meaning Theorem \ref{thrm:exp-util} no longer applies. Because our analysis is based on a worst-case scenario (specifically, graphs that are highly dense and exhibit a relatively uniform degree distribution) and the graphs used in our experiments do not satisfy these assumptions, we adjusted the algorithm's recoloring threshold to better reflect the characteristics of the evaluated datasets and improve empirical performance.

\subsection{SNAP Graphs}

We report summary statistics of our SNAP graphs below.

\begin{table}[hb]
    \centering
    \caption{Statistics of tested SNAP networks}
    \begin{tabular}{|c|c|c|c|}
        \hline
        Graph Name & Nodes & Edges & Max Degree \\ \midrule
        Twitch-ENGB & 7126  & 35324 & 720 \\ \hline
        musae-facebook & 22470 & 171002 & 709 \\ \hline
        Ca-HepTh & 9877 & 25998 & 65 \\ \hline
        Oregon1 & 10670 & 22002 & 2312 \\ \hline
        Twitch-DE & 9498 & 153138 & 4259 \\ \hline
        Ca-GrQc & 5242 & 14496 & 81 \\ \hline
        deezer-EU & 28281 & 92752 & 172 \\ \hline
        musae-github & 37700 & 289003 & 9458 \\ \hline
    \end{tabular}%
    \label{tab:placeholder_label}
\end{table}

\begin{figure*}[ht]
    \centering
    \includegraphics[width=0.9\textwidth]{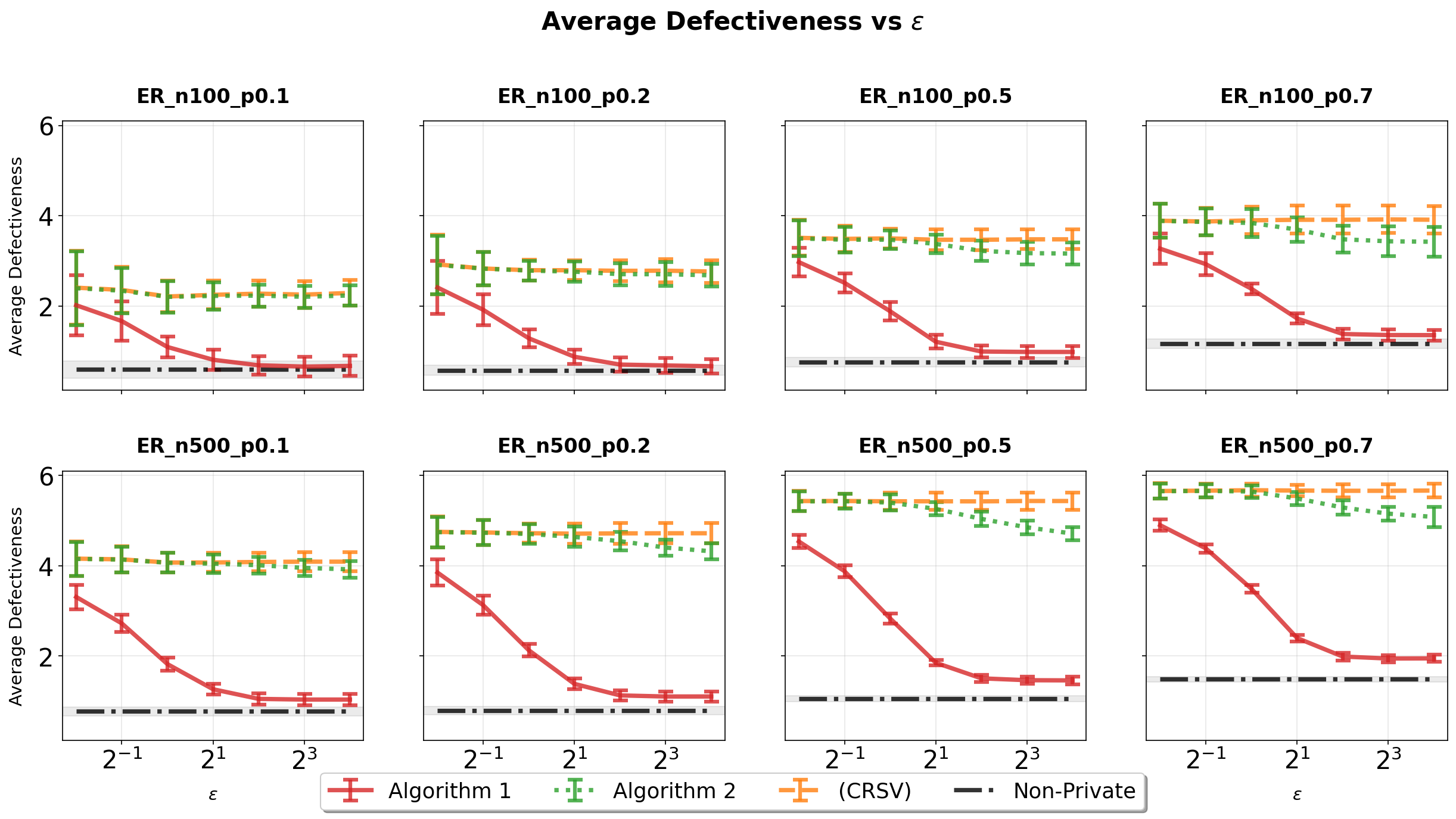}

    \vspace{0.5em}
    \centering
    \includegraphics[width=0.9\textwidth]{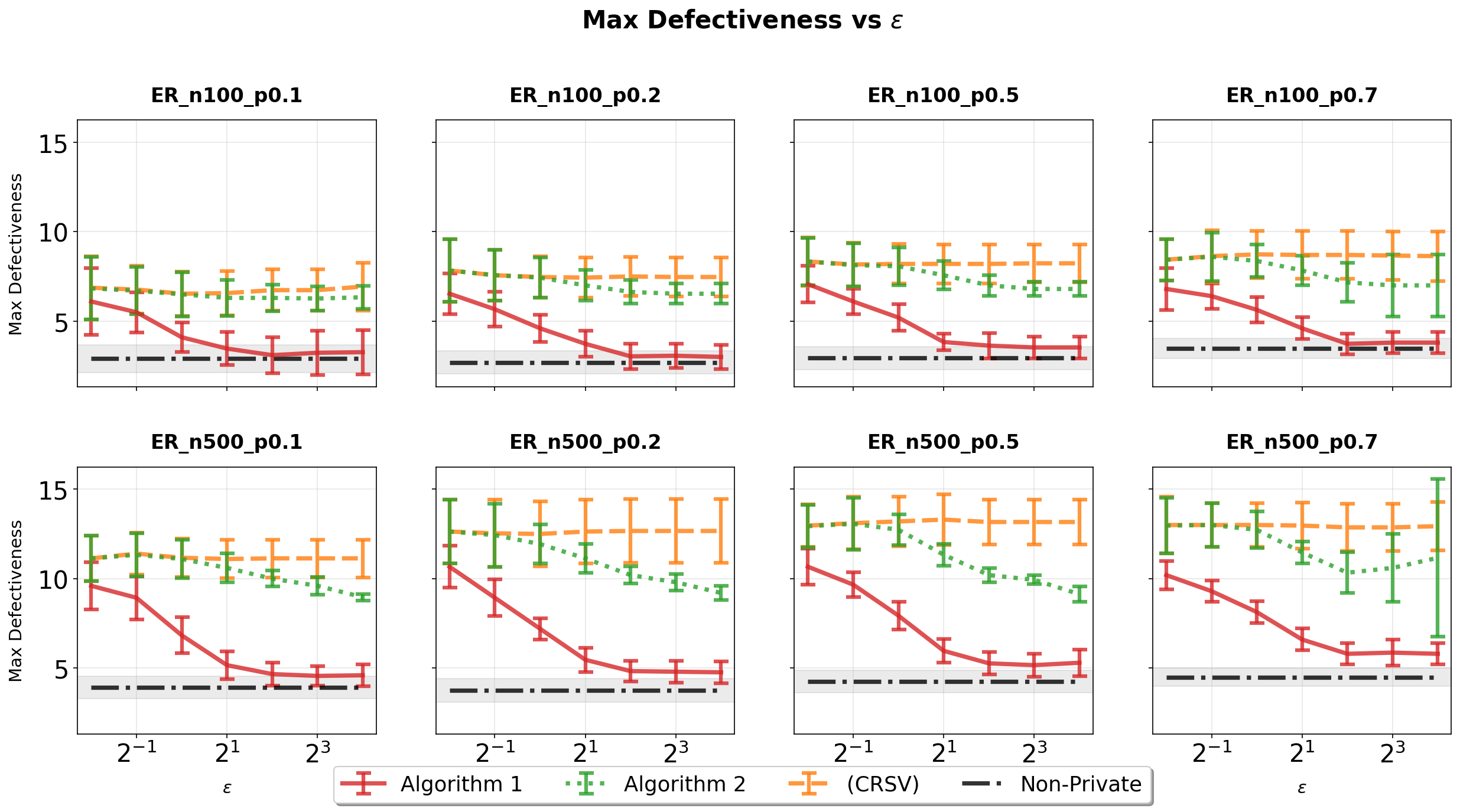}

    \caption{Average (top) and maximum (bottom) defectiveness as a function of $\epsilon$ on $G(n,p)$. }
    \label{fig:defectiveness2}
\end{figure*}

\subsection{SNAP Results}

On the \emph{SNAP networks}, we observe that for Algorithm~\ref{alg:dp-exp}, due to the higher likelihood of worse recoloring by the exponential mechanism, we observe generally larger defectiveness for lower $\epsilon$, and in fact, Algorithm~\ref{alg:dp-exp} tends to approach a greedy non-private baseline as $\epsilon$ increases, thus confirming the tradeoff between defectiveness and privacy. Algorithm~\ref{alg:dp-exp} consistently achieves performance better than both Algorithm~\ref{alg:dp-exp-threshold} and (CRSV) but worse than the greedy non-private baseline at low $\epsilon$, as is expected since we are able to utilize the topology of the graph, but is still constrained by differential privacy. The empirical performance of Algorithm~\ref{alg:dp-exp} suggests that its effectiveness may not only be constrained to d-inductive graphs, but to arbitrary graphs as well. We also observe that Algorithm~\ref{alg:dp-exp-threshold} seems to perform similarly to (CRSV) in terms of average defectiveness. This is because both algorithms begin with the same random initial coloring of the graph. Then, Algorithm~\ref{alg:dp-exp-threshold} selects a subset of high defectiveness nodes to recolor. Since the amount of these nodes is relatively small, especially for graphs which have a small number of very high degree nodes, this does not have a large impact on the average defectiveness over the entire graph. This is also why $\epsilon$ does not seem to impact the average defectiveness in a significant manner.  

For the maximum defectiveness (shown in Figure~\ref{fig:max-def-snap}), we see that excluding Ca-GrQc, Algorithm~\ref{alg:dp-exp-threshold} outperforms CRSV, and Algorithm~\ref{alg:dp-exp} outperforms both. On most graphs, the thresholding mechanism is effective at recoloring highly defective nodes. The high max defectiveness on Ca-GrQc is likely due to it being a denser graph, combined with the scaled down threshold, which facilitates a degenerate case where many neighbors tend to recolor to the same color. Because Algorithm~\ref{alg:dp-exp-threshold} is a one-round parallel recoloring, this does not give the chance for these high defectiveness nodes to fix themselves in later iterations. 

\subsection{Synthetic Graphs}
Next, we test our algorithms on two classes of synthetic graphs, namely Erdős–Rényi random graphs and Barabási-Albert scale-free networks. For Erdős–Rényi random graphs, $n$ denotes the number of nodes and $p$ denotes the probability of an edge between any two given nodes. Figure~\ref{fig:defectiveness2} compares the above algorithms on eight synthetically generated Erdos-Renyi graphs. We keep the same palette size, but we only scale the threshold by a factor of 0.3. 

We run our experiment on eight classes of Erdős–Rényi graphs, whose parameters are listed below.

\begin{table}[hb]
    \centering
    \caption{Erdős–Rényi graph parameters}
    \begin{tabular}{|c|c|}
        \hline
        n & p \\ \hline
        100 & 0.1 \\ \hline
        100 & 0.2 \\ \hline
        100 & 0.5 \\ \hline
        100 & 0.7 \\ \hline
        500 & 0.1 \\ \hline
        500 & 0.2 \\ \hline
        500 & 0.5 \\ \hline
        500 & 0.7 \\ \hline
    \end{tabular}%
    \label{tab:placeholder_label}
\end{table}

We theorize that synthetic graphs will offer more insight into the efficacy of Algorithm~\ref{alg:dp-exp-threshold}. Since the threshold is dependent on the max degree, if there are very few vertices with high degrees, then this leads to very few vertices being resampled. Thus, using Erdos-Renyi creates graphs where all vertices have a similar degree, which allows for more vertices to be resampled, making the effect of the thresholding mechanism easier to observe. 

In Figure~~\ref{fig:defectiveness3}, we also test these algorithms on eight classes of Barabási-Albert scale-free networks using preferential attachment, whose parameters are listed below. 

\begin{table}[H]
    \centering
    \caption{Barabási-Albert graph parameters}
    \begin{tabular}{|c|c|}
        \hline
        m & n \\ \hline
        100 & 2 \\ \hline
        100 & 5 \\ \hline
        500 & 2 \\ \hline
        500 & 5 \\ \hline
        1000 & 2 \\ \hline
        1000 & 5 \\ \hline
        2000 & 3 \\ \hline
        5000 & 2 \\ \hline
    \end{tabular}%
    \label{tab:placeholder_label}
\end{table}

Our results are shown below.

\begin{figure*}[ht]
    \centering
    \includegraphics[width=0.9\textwidth]{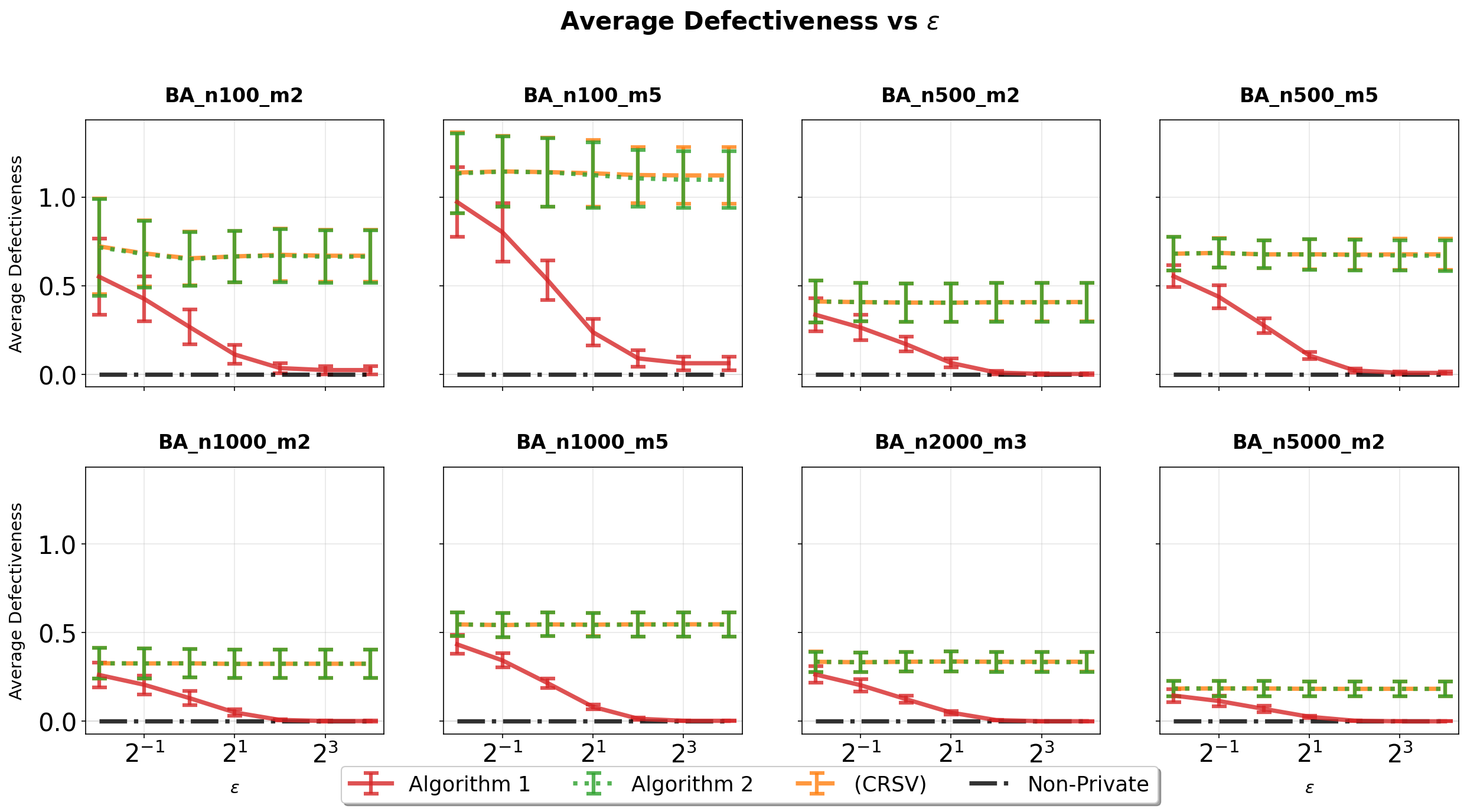}

    \vspace{0.5em}

    \includegraphics[width=0.9\textwidth]{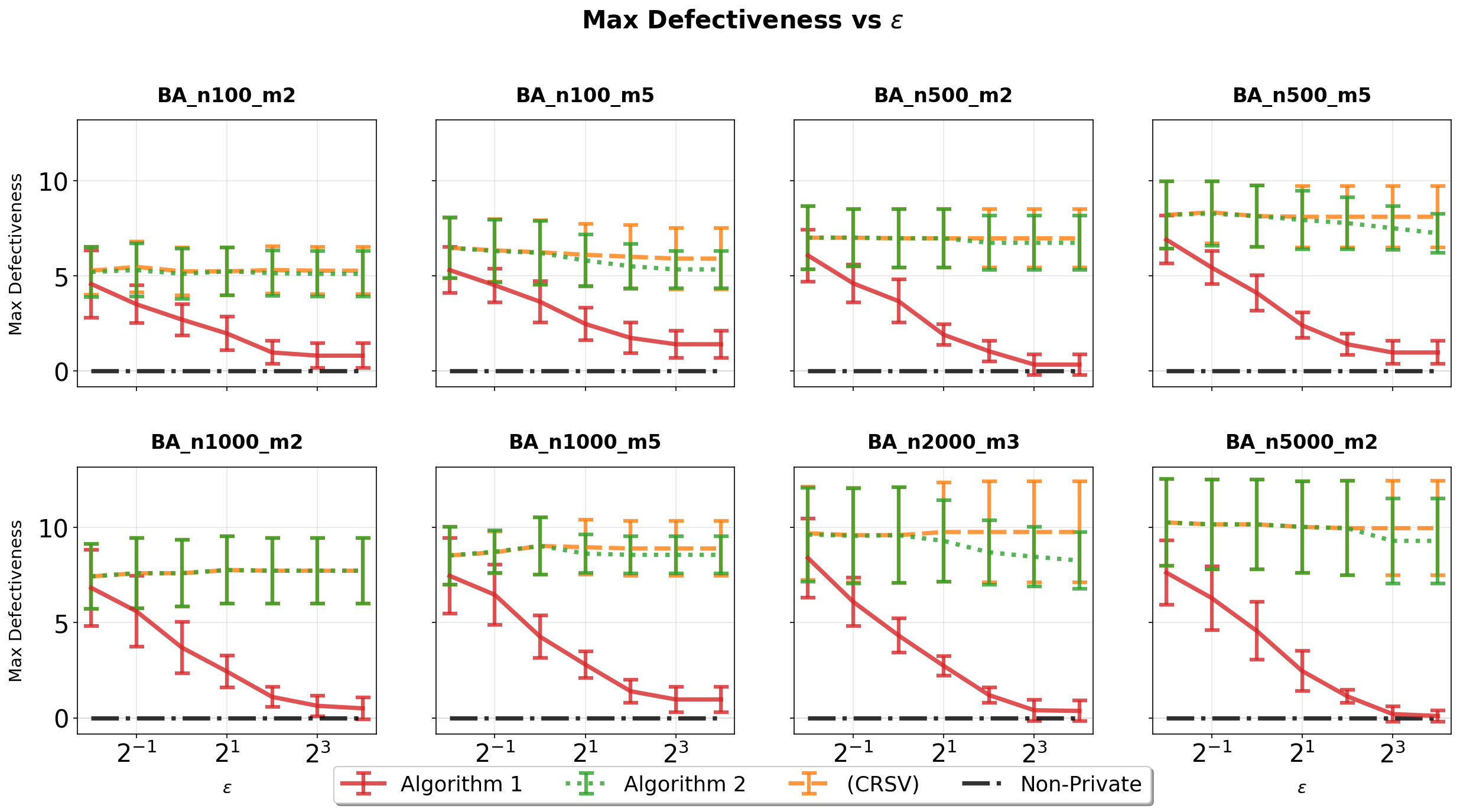}

     \caption{Average (top) and maximum (bottom) defectiveness as a function of $\epsilon$ on Barabási–Albert graphs $\mathrm{BA}(n,m)$. m is a hyperparameter of Barabási–Albert graphs which indicate the number of edges to attach from a new node to existing nodes}
    \label{fig:defectiveness3}
\end{figure*}

\subsection{Synthetic Results}
Our experiments show that across a wide variety of synthetic graphs, Algorithm~\ref{alg:dp-exp} outperforms (CRSV) and Algorithm~\ref{alg:dp-exp-threshold} in terms of both average and max defectiveness (Figures~\ref{fig:defectiveness2} and~\ref{fig:defectiveness3}). Furthermore, we are able to control the utility-privacy tradeoff of the algorithm via the hyperparameter $\epsilon$. This is consistent with our earlier SNAP results as Algorithm~\ref{alg:dp-exp} is able to most effectively utilize the topology of the graph, leading to an overall higher quality coloring. This also reinforces the notion that the theoretical results of Algorithm~\ref{alg:dp-exp} can be extended to arbitrary graphs rather than just d-inductive ones. 

For Erdos-Renyi graphs (Figure~\ref{fig:defectiveness2}), we show that while Algorithm~\ref{alg:dp-exp-threshold} performs slightly better than (CRSV) in terms of average defectiveness, there is a significant improvement in max defectiveness. This is also consistent with our earlier results as Algorithm~\ref{alg:dp-exp-threshold} only resamples a small subset of the vertices, which leads to less impact for average defectiveness. However, the thresholding mechanism ensures that the vertices it does resample are highly defective, which leads to a large improvement in the max defectiveness. In the case of scale-free graphs (Figure~\ref{fig:defectiveness3}), our results are largely consistent with previous results. The average defectiveness is almost identical, this can be attributed to the topology of the graph, as there are a small amount of high-degree vertices which are recolored. However, even this small amount of recoloring leads to an improvement on max defectiveness, over a wide variety of parameters. 

\section{Conclusion}
In this paper, we proposed two novel edge-differentially private graph coloring algorithms. Both algorithms use $O(\frac{\Delta}{\log n})$ colors. Algorithm 1 achieves a maximum defectiveness of $O(\log n + d)$ 
d-inductive graphs, while Algorithm 2 attains a more general bound of 
$O(\log n)$ for arbitrary graphs. These theoretical guarantees are comparable to those of existing differentially private algorithms; however, our empirical evaluation across a wide range of graphs demonstrates that our algorithms achieve substantially lower average and maximum defectiveness while offering a tunable privacy–defectiveness tradeoff. An open question is whether the theoretical guarantees of Algorithm 1 can be generalized to arbitrary graphs, and whether we can achieve better max defectiveness with a larger palette size.



\bibliographystyle{plain}
\bibliography{refs}


\end{document}